\declaretheoremstyle[
notefont=\bfseries, notebraces={}{},
bodyfont=\normalfont,
postheadspace=0.5em,
numbered=yes,
]{mystyle}
\declaretheorem[style=mystyle]{lemma}
\theoremstyle{definition}
    \def \mC {\text{\boldmath$C$}}
\def \svec {\text{\boldmath$s$}}
\def \wvec {\text{\boldmath$w$}}    \def \mW {\text{\boldmath$W$}}
\def \yvec {\text{\boldmath$y$}}    \def \mY {\text{\boldmath$Y$}}
\def \zvec {\text{\boldmath$z$}}
\def \alphavec        {\text{\boldmath$\alpha$}}
\def \betavec         {\text{\boldmath$\beta$}}
\def \gammavec        {\text{\boldmath$\gamma$}}
\def \epsilonvec      {\text{\boldmath$\epsilon$}}
\def \zetavec         {\text{\boldmath$\zeta$}}
\def \etavec          {\text{\boldmath$\eta$}}
\def \thetavec        {\text{\boldmath$\theta$}}
\def \lambdavec       {\text{\boldmath$\lambda$}}
\def \muvec           {\text{\boldmath$\mu$}}
\def \xivec           {\text{\boldmath$\xi$}}
\def \rhovec          {\text{\boldmath$\rho$}}
\def \varphivec       {\text{\boldmath$\varphi$}}
\def \mSigma   {\mathbf{\Sigma}}
\begin{document}

\title{Variational inference for cutting feedback in misspecified models}
\date{\empty}

\author[1]{Xuejun Yu}
\author[1,2]{David J. Nott\thanks{Corresponding author:  standj@nus.edu.sg}}
\author[3]{Michael Stanley Smith}
\affil[1]{Department of Statistics and Data Science, National University of Singapore, Singapore 117546}
\affil[2]{Institute for Operations Research and Analytics, National University of Singapore, Singapore 119077}
\affil[3]{Melbourne Business School, University of Melbourne, 200 Leicester Street,
Carlton VIC 3053}

\maketitle

\vspace{-0.3in}
\begin{abstract}
\noindent Bayesian analyses combine information represented
by different terms in a joint Bayesian model.  When one or more of the terms is misspecified, it can be
helpful to restrict the use of information from suspect model components to modify posterior inference.
This is called ``cutting feedback", and both the specification
and computation of
the posterior for such ``cut models'' is challenging.   
In this paper, we define cut posterior distributions as solutions
to constrained optimization problems, and propose
variational methods
for their computation.
These methods are faster than existing Markov chain Monte Carlo (MCMC) approaches by an order of magnitude. 
It is also shown that variational methods allow 
for the evaluation of computationally intensive conflict checks
that can be used to decide whether or not feedback should be cut. 
Our methods are illustrated in a number of simulated and real examples, including an application
where recent methodological advances that
combine variational inference and MCMC within the variational optimization
are used.

\smallskip
\noindent \textbf{Keywords.}  Bayesian model criticism; Cutting feedback; Model misspecification; Modular inference.

\end{abstract}

\section{Introduction}\label{sec:Intro}


Bayesian inference combines information represented by different
terms in a joint Bayesian model. For example, these terms may be likelihood functions for different data sources, or components
of hierarchical prior distributions for model parameters.  
Sometimes the terms
can be grouped into subsets, where
each subset is called a ``module'' that represents
one aspect of the model.
Often different sources of data can be
used to estimate the parameters of each module
separately.  However, it is the joint model that
specifies how all these data sources can be combined to allow joint Bayesian inference with uncertainty propagation between modules.

When one of the modules is misspecified, 
it may be desirable to restrict the way it interacts with other modules
to produce estimates that deviate from the full Bayesian posterior distribution. 
One approach to achieve this is to start
with a conditional or sequential representation of the joint posterior distribution, but then
modify some of its terms to ensure that suspect
information is removed in inference for some parameters.
This is referred to as modularization, where the modules interact more weakly than
in a conventional Bayesian analysis~\citep{liu+bb09}.  

In this paper we consider the computational problems which arise in a 
form of modularization called ``cutting feedback'' and propose optimization-based
variational approximation methods to formulate cut models.  We do this in two ways. First, similar to
Markov chain Monte Carlo (MCMC) based cut methods which modify Gibbs sampling approaches, we consider modified variational
inference methods based on deletion of so-called ``messages'' within variational message passing algorithms.  
Second, we consider an explicit formulation of the cut posterior distribution as the solution to a constrained optimization
problem, and approximate it using black box variational inference methods. In this case, our variational inference methods are faster than MCMC approaches by an order of magnitude.  We also
consider how variational inference can be used in computationally burdensome checks for conflicting information, and
these checks are helpful for deciding whether to cut in the first place.  

Much early work on modularization arose in 
pharmacokinetic/pharmacodynamic modeling applications, where
a cutting feedback approach to Bayesian analyses was motivated 
as an errors-in-variables method \citep{bennett+w01,lunn+bsgn09}.  
In these applications, there are two modules: a pharmacokinetic (PK) model and a pharmacodynamic (PD) model.
The PD model describes the effect of drug concentration
on a physiological outcome, while the PK model describes 
the evolution of drug concentration in the bloodstream. 
Specification of a realistic PD model is difficult, so
this module may be misspecified, while there is greater confidence in the PK model. 
There are separate data sources which
can be used to infer the PK and PD model parameters, and it is desirable to not let the possibly misspecified PD model contaminate estimation of the 
PK model parameters, while still allowing appropriate propagation of uncertainty.    
Cutting feedback methods in PK/PD modelling were
inspired by earlier two-stage frequentist methods 
\citep{zhang+bs03,zhang+bs03b}.  
In other complex applications, there can be similar concerns that
suspect modules may contaminate estimation of parameters of interest.  
\citet{jacob+mhr17} give a recent
overview of modularized Bayesian analyses with a decision-theoretic perspective, including references to applications 
in areas such as climate modelling, epidemiology, causal
inference using propensity scores and meta-analysis among others.  
  
Modularization raises interesting statistical issues, and 
\citet{liu+bb09} consider these in the context of analyzing computer models  \citep{kennedy+o01}.  They 
discuss the motivations for using modified Bayesian inference within an existing
flawed model, rather than the conventional approach of Bayesian model criticism leading to model improvement.  
As well as providing more appropriate inferences, modularization can ensure 
that model parameters retain their intended meaning within suspect modules, assisting 
model interpretation and criticism.   \citet{lunn+bsgn09} motivate 
cut procedures as corresponding to specification of ``distributional constants" and 
they argue that it is sometimes reasonable to consider a cut posterior specified through inconsistent conditional distributions.
An alternative to the dichotomy of using the cut or full posterior
distribution has recently been considered by \cite{carmona+n20}, where they outline
a semi-modular method in which feedback is partially cut.  
\cite{nicholls+lwc22} consider the justification of semi-modular inference
from a generalized Bayesian perspective.  

Our objective in this paper is to describe the useful role that variational inference can play in the analysis of cut models.
In concurrent independent work, \cite{carmona+n21} also consider
variational inference for modularized
Bayesian analyses, and we discuss the ways that their contribution
differs from ours in Section 4.  
In Section 2, we describe some of the ways that cut posterior distributions are defined in the existing literature.
The definition can be implicit through
modification of an MCMC algorithm targeting the full posterior, or explicit through direct specification of a target
distribution, and we discuss both perspectives.  We also
survey some of the many applications of
cutting feedback methods.  
In Section 3, we give a brief introduction to variational inference  
and then describe cut procedures based on mean field variational approximations.  
In this context, variational message passing algorithms provide a natural way to define
a variational cut posterior implicitly, similar to MCMC implementations of cutting feedback 
based on modified Gibbs sampling algorithms.  
In Section 4, we describe why conventional Bayesian 
computation using MCMC is difficult
for cut models. A simplified two module system is then considered 
where an explicit formulation of the cut posterior distribution is available,
and can be expressed as the solution to a constrained optimization problem.  
This demonstrates 
that the cut posterior distribution is a variational approximation to the full posterior distribution
for a certain approximating family, and
motivates the use of fixed form variational approximations for computation.
Section 5 describes the use of prior-data conflict checking methods for 
deciding
whether or not to cut. Here, variational inference  
greatly facilitates a practical computational implementation of the methods.  
Section 6 illustrates the methodology for
two real data examples discussed in the literature previously.
In particular, we consider an example from~\cite{styring17} and \cite{carmona+n20} where we use a recently 
developed method~\citep{loaiza-maya+snd20} that combines MCMC and variational
approximation within the variational optimization. The approach allows
for the imputation of an unobserved discrete variable, which is otherwise difficult to
do within the variational optimization. 
Section 7 gives some concluding discussion.

\section{Cutting feedback}\label{cutting}

In this section we discuss the ways that cut posterior distributions are usually defined in the existing
literature. This includes implicit and explicit definitions. However, 
before doing so it is helpful to consider a simple motivating example from \citet{liu+bb09}, where the full posterior distribution 
behaves in undesirable ways and where cut procedures are beneficial.  

\subsection{Illustrative example}\label{sec:simpleill}
Suppose we have a small sample $\zvec=(z_1,\dots, z_{n_1})^\top$ with
$z_i\sim N(\varphi,1)$, and we are interested in inference about $\varphi$.  The prior distribution for 
$\varphi$ is $N(0,\delta_1^{-1})$, where $\delta_1>0$ is the prior precision.  Due to the small sample size $n_1$, 
it is thought desirable to consider another source of data $\wvec=(w_1,\dots, w_{n_2})^\top$, 
for which $w_i\sim N(\varphi+\eta,1)$.  The sample size $n_2$ is large, but the mean of $w_i$ is equal to 
$\varphi+\eta$ rather than the parameter of interest $\varphi$, so that
$\eta$ is a bias parameter with prior
$\eta\sim N(0,\delta_2^{-1})$, where $\delta_2>0$ is the prior precision. Suppose that the analyst
has high confidence that the bias $\eta$ is small, and uses a large value for $\delta_2$,
resulting in a prior density for $\eta$ concentrated around $0$.  
Then if the true bias is in fact large, the information from the biased sample $\wvec$ can dominate inference
about $\varphi$ and furthermore the strong prior on $\eta$ can result in misleading inferences.
In this case, \cite{liu+bb09} point out that there is little to gain
from using the biased data for inference about
$\varphi$.  The model can be considered as a two module system.  One module contains the prior for $\varphi$
and the likelihood term for $\zvec$, and another module contains the prior for $\eta$ and the likelihood
term for $\wvec$.  The misspecified module is the second one, and it is the prior term for $\eta$ that
introduces posterior inaccuracy.

To illustrate the sizable impact of the misspecified module,   
we simulate $n_1=100$, $n_2=1000$ observations from the data generating
process with parameter values $\varphi=0$ and $\eta=1$.  
The prior precision parameters are $\delta_1=1$ and $\delta_2=100$, with the latter 
chosen so that the true value $\eta=1$ lies out in the tails of the prior.  Figure \ref{biased-marginal} in Section 3.3 shows the poor
behaviour of the full posterior distribution in this example, and compares this with a ``cut'' posterior distribution 
where the influence of the biased data is removed in inference about $\varphi$.  
The cut posterior inferences are more reasonable than those from the full posterior.  
This simple example demonstrates the potential advantages of cutting feedback, 
and the way that misspecification of one module can contaminate inferences 
from well-specified modules.  More complex examples are considered later.  

\subsection{Cutting feedback implicitly}
We now consider two different ways of defining a cut posterior distribution. The first is through
a Markov chain Monte Carlo (MCMC) sampler,
where some of the full conditional distributions are
modified to remove misspecified
model terms when sampling some of the parameters.
The invariant distribution
of this sampler is an implicitly defined cut posterior. 
To make the ideas easier to describe we first introduce some notation.  

A parametric statistical model is defined for data $\yvec$ with parameters $\thetavec$.  
We consider Bayesian inference with prior density $p(\thetavec)$ and sampling density 
$p(\yvec|\thetavec)$.  Let $\thetavec=(\thetavec_1^\top,\dots, \thetavec_K^\top)^\top$ 
be a partition of $\thetavec$ into $K$ blocks, and assume
the posterior can be factorized as 
\begin{align}
p(\thetavec|\yvec)\propto  p(\thetavec)p(\yvec|\thetavec) & = \prod_{j=1}^M f_j(\thetavec_{A(j)}), \label{fact}
\end{align}
where $A(j)$ denotes the indices of parameter blocks
which appear in $f_j(\cdot)$ (i.e. $k\in A(j)$ if $f_j(\cdot)$
depends on $\thetavec_k$) and $\thetavec_{A(j)}=\left\{\thetavec_k:k\in A(j)\right\}$.  
Factorizations like (\ref{fact}) arise where the joint model
is specified as a directed acyclic graph, but our discussion is more general.  In our notation 
we suppress any dependence of factors $f_j(\cdot)$ on $\yvec$, and the factors
$f_j(\cdot)$ are not uniquely defined.

The \texttt{cut} function in the WinBUGS and OpenBUGS software packages provides a popular
implementation of cutting feedback implicitly; see~\cite{lunn+bsgn09} for a description.  
Write $\thetavec_{-i}=\{\thetavec_j: j\neq i\}$, and consider a Gibbs sampling scheme for the posterior distribution
for $\thetavec$, where we iteratively sample from the full conditional densities
$$p(\thetavec_i|\thetavec_{-i},\yvec)\propto \prod_{j:i\in A(j)}f_j(\thetavec_{A(j)}).$$
On the right-hand side of the above expression we have dropped all terms in the joint model which do not depend on
$\thetavec_i$.  

Suppose we are concerned that one of the factors in the joint model, $f_k(\thetavec_{A(k)})$ say, 
is misspecified, and that it may contaminate inference about some of the other parameters.  
Futhermore, suppose that it is felt that the harmful effects of this misspecification
on inference occur primarily through the influence of this factor on one of the parameter
blocks, without loss of generality $\thetavec_1$ say.  To remedy this we consider 
a modified Gibbs sampler in which sampling from $p(\thetavec_1|\thetavec_{-1},\yvec)$ is replaced with sampling from
$$p_{\text{cut}}(\thetavec_1|\thetavec_{-1},\yvec) \propto \prod_{j:1\in A(j), j\neq k} f_j(\thetavec_{A(j)}).$$
where $f_k(\thetavec_{A(k)})$ has been dropped in forming the full conditional for $\thetavec_1$.
(It is possible to drop multiple factors too).    
The cut posterior distribution is defined here only implicitly through modification of an MCMC algorithm, leading
to a set of possibly inconsistent conditional distributions.
Despite our suggestive notation, $p_{\text{cut}}(\thetavec_1|\thetavec_{-1},\yvec)$
is not the full conditional density of the cut posterior density, but simply denotes the 
conditional density we sample from in the modified MCMC algorithm.
\cite{plummer15} points out that  if we are unable to sample the conditional distributions
exactly, but instead use a Metropolis-within-Gibbs approach, then the distribution defined
through the algorithm depends on the proposal distribution.  

\subsection{Cutting feedback explicitly}\label{sec:explict}

To clarify the cutting feedback approach, \cite{plummer15} considers a simplified two module
system that is nevertheless general enough to cover many situations where cutting feedback is applied in practice.
The two module system is shown in Figure~\ref{two-module}.
\begin{figure}[h]
\centerline{\includegraphics[width=60mm]{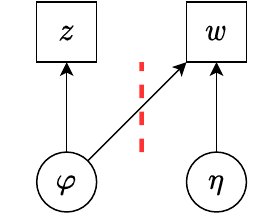}}
\caption{\label{two-module} Graphical representation of a two module system with cutting feedback.  The dashed line indicates
the cut.}
\end{figure}
There are two data sources, which we
denote here as $\wvec$ and $\zvec$.  The likelihood term for $\zvec$ depends on $\varphivec$, and the likelihood
term for $\wvec$ depends on $\etavec$ and $\varphivec$, and $\wvec$ and $\zvec$ are conditionally
independent given $\varphivec$.  We write $\yvec=(\wvec,\zvec)$, and set 
$\thetavec_1=\varphivec$, $\thetavec_2=\etavec$, so that $\thetavec=(\varphivec,\etavec)$.  
The full posterior density of the
joint model can be written in the form (\ref{fact}) as 
$$p(\thetavec|\yvec) \propto p(\thetavec)p(\yvec|\thetavec)=p(\varphivec)p(\etavec|\varphivec)p(\zvec|\varphivec)p(\wvec|\varphivec,\etavec).$$
Figure \ref{two-module} shows the situation where $p(\etavec|\varphivec)$ does
not depend on $\varphivec$, but we allow the more general form of the prior
in what follows. 

The dashed line in Figure \ref{two-module} indicates a ``cut'' in the graph, representing concern that 
$p(\wvec|\varphivec,\etavec)$ may influence inference on
 $\varphivec$. In the modified Gibbs sampling scheme
outlined in the previous subsection, the conditional distribution for $\etavec$ is unchanged, but
the conditional distribution for $\varphivec$ is modified to become
$$p_{\text{cut}}(\varphivec|\etavec,\yvec)\propto p(\varphivec)p(\zvec|\varphivec)\propto p(\varphivec|\zvec).$$
This distribution does not depend on $\etavec$, and the modified Gibbs sampler draws independent samples from the cut joint posterior 
\begin{align}
  p_{\text{cut}}(\varphivec,\etavec|\yvec) & = p(\varphivec|\zvec)p(\etavec|\varphivec,\wvec),  \label{cut-two-module}
\end{align}
where $p(\varphivec|\zvec)$ is the marginal posterior density for $\varphivec$ given $\zvec$, and $p(\etavec|\varphivec,\wvec)$
is the conditional posterior density for $\etavec$ given $\varphivec,\yvec$ (which does not depend on $\zvec$).  

Comparing~\eqref{cut-two-module} with the full joint posterior density
\begin{align}
  p(\varphivec,\etavec|\yvec) & = p(\varphivec|\yvec)p(\etavec|\varphivec,\wvec),  \label{joint-two-module}
\end{align}
where $p(\varphivec|\yvec)$ is the marginal posterior density for $\varphivec$ given $\yvec$, we see that in the cut posterior
density the term $p(\varphivec|\yvec)$ is replaced with $p(\varphivec|\zvec)$.
Thus in the cut posterior, the model for $\wvec$ does not influence inference about
$\varphivec$, while
uncertainty about $\varphivec$ is still propagated when computing inference about $\etavec$.

\subsection{Applications of cutting feedback}

To motivate the importance of the problems we describe it
is helpful to discuss some 
applications of Bayesian modularization and cutting feedback methods.   
The pharmacokinetic and pharmacodynamic models
 mentioned in the introduction were some of the first 
applications, and
\cite{lunn+bsgn09} provides a summary.    
Modularized inference methods 
have also been widely used in causal inference \citep{mccandless+des10,mccandless+rb12,
zigler+wyycd13,pompe+j21}.  Propensity scores are a commonly used
tool in the causal inference literature, and cut methods can be used
to prevent the response model from influencing
the estimation of the propensity scores themselves, while still
propagating uncertainty about them.  
Health effects of air pollution are considered in \cite{blangiardo+hr11}, where
they use survey data to adjust ambient pollution level data in describing
uncertainty in air pollution exposure.  Cut methods can be used here to 
prevent a possibly misspecified module for health outcomes from influencing exposure estimates.     \cite{nicholson-etal21} 
consider the notion of ``interoperability" in modelling for pandemic 
preparedness, which incorporates modularity as one key statistical principle.

Two-stage estimation methods are widely used in econometric analysis, 
where auxiliary models are used to impute observed 
values in the response model. 
These methods are closely related to cutting feedback approaches 
in the Bayesian context.  In two-stage methods, 
accurate propagation of uncertainty in the imputed values is 
important when undertaking inference; see \cite{murphy+t02}. 
A key application is endogeneity correction, which is necessary in many social science studies; for example, in marketing \citep{petrin+t10}.  

\cite{liu+bb09} were motivated to study modularized Bayesian methods by
applications to the analysis of computer models.  
They consider other applications as well, such as meta-analysis, and
this is also considered for a problem in ecology by \cite{ogle+bs13}.  
A semi-modular approach to geographically weighted
regression has been recently discussed in \cite{liu+g21}, where 
choosing how much to pool information spatially in 
a local estimation procedure can be thought of as a problem of managing
model misspecification.  
An interesting archaeological application for cut methods, where 
one of the modules involves only prior terms, is discussed by
\cite{styring17} and \cite{carmona+n20}, and we discuss
this later.  Similar to \cite{styring17}, \cite{moss+r22} also consider
cut methods for priors but in the context of hidden Markov models.

The references we have given here about modularized inference 
applications are not exhaustive, and each of them is often
typical of a larger body of work in a certain discipline. 

\section{Mean field variational inference for cut models}

In this Section we describe variational inference methods for defining cut posterior distributions based on mean
field approximations and variational message passing algorithms.  These methods are analogous to the implicit
definitions of a cut posterior distribution considered in Section 2.1 based on modified Gibbs sampling algorithms.  
There is previous work on variational inference with misspecification \citep{wang+b19} as well as for so-called Gibbs posterior
distributions \citep{alquier+rc16,frazier+lmk21}, 
but variational inference for modularized anlayses are different in the sense that 
a full probabilistic model is assumed, but serious misspecification is confined to only some model components.

\subsection{Variational inference}
Variational approximations of a Bayesian posterior distribution are obtained by minimization of a divergence measure 
between an approximating density and the true posterior density
$p(\thetavec|\yvec)\propto p(\thetavec)p(\yvec|\thetavec)$. The Kullback-Leibler divergence is the most popular metric adopted, so that
for an approximating family ${\cal F}$ of densities,
the optimal approximating density is
\begin{align}
  q^*(\thetavec) & =\arg \min_{q(\thetavec)\in {\cal F}} \text{KL}(q(\thetavec) || p(\thetavec|\yvec)), \label{varopt}
\end{align}
where
\begin{align}
  \text{KL}(q(\thetavec) || p(\thetavec|\yvec)) & = \int \log \frac{q(\thetavec)}{p(\thetavec|\yvec)}q(\thetavec)\,d\thetavec
\end{align}
is the Kullback-Leibler divergence between $q(\thetavec)$ and $p(\thetavec|\yvec)$. It is straightforward to show that 
the optimization at~(\ref{varopt}) is equivalent to
\begin{align}
  q^*(\thetavec) & =\arg \max_{q(\thetavec)\in {\cal F}} {\cal L}(q), \label{varoptmax}
\end{align}
where ${\cal L}(q)$ is called the evidence lower bound (ELBO), and defined as
\begin{align}
  {\cal L}(q) & = \int \log \frac{p(\thetavec)p(\yvec|\thetavec)}{q(\thetavec)} q(\thetavec) d\thetavec.
\end{align}
Overviews of variational inference are given in \citet{ormerod+w10} and \citet{blei+km17}.
The approximating family ${\cal F}$ is usually defined by either a product restriction (leading to 
mean field approximations) or fixed form approximations.  We consider mean field approximations first.

\subsection{Mean field approximations and variational message passing}

Similar to Section 2.1, 
let $\thetavec=(\thetavec_1^\top,\dots, \thetavec_K^\top)^\top$ be a partition of $\thetavec$ into $K$ blocks, and suppose 
${\cal F}$ consists of densities that have the form
\begin{align}
 q(\thetavec) & = \prod_{i=1}^K q_i(\thetavec_i).  \label{mfvb}
\end{align}
Considering the $i$th term $q_i(\thetavec_i)$ in (\ref{mfvb}) and with the terms $q_j(\thetavec_j)$, $j\neq i$ held fixed,   
the value for $q_i(\thetavec_i)$ solving the optimization problem at (\ref{varoptmax}) is
\begin{align}
  q_i^*(\thetavec_i) & \propto \exp\left( E_{-\thetavec_i}(\log p(\thetavec )p(\yvec|\thetavec ) ) \right), \label{coord-ascent}
\end{align}
where $E_{-\thetavec_i}(\cdot)$ denotes expectation with respect to the density $\prod_{j\neq i}q_j(\thetavec_j)$ 
(see, for example, \cite{ormerod+w10}).  
The update (\ref{coord-ascent}) can be used in a coordinate ascent optimization scheme, where after initialization we cycle through the terms 
in $q(\thetavec)$, updating each using (\ref{coord-ascent}) until convergence.  We can also write (\ref{coord-ascent}) as
\begin{align}
  q^*_i(\thetavec_i) & \propto \exp\left( E_{-\thetavec_i}(\log p(\thetavec_i|\thetavec_{-i},\yvec ) ) \right), 
\end{align}
where $p(\thetavec_i|\thetavec_{-i},\yvec)$ is the posterior full conditional distribution for $\thetavec_i$, and 
this formulation shows the close connection between mean field variational inference and Gibbs sampling.  

Consider the factorization of the joint model (\ref{fact}).  Then 
\begin{align*}
\log p(\thetavec)p(\yvec|\thetavec) & = \sum_{j=1}^M \log f_j(\thetavec_{A(j)}),
\end{align*}
and the update (\ref{coord-ascent}) can be written as
\begin{align}
  q^*_i(\thetavec_i) & \propto \prod_{j: i\in A(j)} m_{f_j\rightarrow \thetavec_i}(\thetavec_i), \label{message-passing}
\end{align}
where 
\begin{align*}
  m_{f_j\rightarrow \thetavec_i}(\thetavec_i) & = \exp\left(E_{-\thetavec_i}(\log f_j(\thetavec_{A(j)}))\right).
\end{align*}
The functions $m_{f_j\rightarrow\thetavec_i}(\thetavec_i)$ may be thought of as ``messages" from factor $f_j(\cdot)$ in the model
to $\thetavec_i$.    
For more general discussions of variational message passing algorithms, see \cite{winn+b05}, \cite{minka05}, \cite{knowles+m11} and \cite{wand17}.  \cite{wand17} considers a message passing formulation of mean field approximations involving messages
from both model factors to parameters and factors to parameters, representing the model using a factor graph.  
Computations are formulated in terms of factor graph fragments, and the approach
ensures computational modularity and allows extensions to arbitrarily large models.  

\subsection{Cutting feedback with message passing}

The factorization of the update (\ref{coord-ascent}) into a product of messages motivates one approach to defining
a cut variational posterior distribution.  Write $q^*_1(\thetavec_1),\dots, q^*_K(\thetavec_K)$ for the 
terms of the mean field approximation optimizing the ELBO without cutting feedback.  Similar to 
the discussion of modified Gibbs sampling algorithms in Section 2, suppose that factor
$f_k(\thetavec_{A(k)})$ in the joint model is thought to be misspecified, and that we 
are concerned about the effect of this misspecification on $\thetavec_1$.  We can construct a cut marginal
posterior distribution for $\thetavec_1$ by changing $q^*_1(\thetavec_1)$ to 
\begin{align}
  q_{\text{cut},1}(\thetavec_1) & \propto \prod_{j:1\in A(j), j\neq k} m_{f_j\rightarrow \thetavec_1}(\thetavec_1). \label{cut-marginal}
\end{align}
where the messages in (\ref{cut-marginal}) are the ones obtained at convergence in approximating the true posterior, 
but the message from factor $f_k(\cdot)$ to $\thetavec_1$ is left out.    
After defining a cut marginal posterior distribution for $\thetavec_1$ in this way, we then fix $q_1(\thetavec_1)$ 
to (\ref{cut-marginal}) and optimize the remaining terms 
$q_2(\thetavec_2),\dots, q_K(\thetavec_K)$ using the usual update (\ref{coord-ascent}) until convergence, resulting
in optimal terms $q_{\text{cut},i}(\thetavec_i)$, $i=2,\dots, K$.  The variational cut posterior is then
\begin{align}
  q_{\text{cut}}(\thetavec) & = \prod_{i=1}^K q_{\text{cut},i}(\thetavec_i), \label{cut-joint}
\end{align}
and is an approximation to the joint posterior maximizing the ELBO subject to
constraining the $\thetavec_1$ marginal to be (\ref{cut-marginal}).  
Other modifications of variational message passing algorithms can also be used to produce cut procedures and posteriors.
Algorithm \ref{cut-vmp} describes explicitly 
cut variational message passing for a single cut of the model factor 
$f_k(\cdot)$ on $\thetavec_1$.   
\begin{algorithm}[!h] 
\caption{Cut variational posterior via message passing: 
removing the contribution of $f_k(\cdot)$ to $\theta_1$.}
\label{cut-vmp}
   \vspace{0.1in}
  \noindent {\it Initialization:} 
  
  \begin{itemize}
  \item Initialize $q_2(\thetavec_2),\dots, q_K(\thetavec_K)$.
   \end{itemize}
  
  \noindent{\it Computation of cut posterior density $\prod_{i=1}^K q_{\text{cut},i}(\thetavec_i)$:}
  \begin{enumerate}
  \item Until convergence do:
  \begin{itemize}
  \item For $i=1,\dots, K$: \\
  $$q_i(\thetavec_i)\leftarrow C_i(q)^{-1} \left\{\prod_{j:i\in A(j)} m_{f_j\rightarrow\thetavec_i}(\thetavec_i)\right\},$$
  where $C_i(q)$ is a normalizing constant making $q_i(\thetavec_i)$ integrate to one.  $C_i(q)$ will depend on the factor $i$ being updated and the current value of $q$.
  \end{itemize} 
  \item Calculate
  $$q_{\text{cut},1}(\thetavec_1) = C_{\text{cut},1}^{-1} \prod_{j:1\in A(j),j\neq k}m_{f_j\rightarrow\thetavec_1}(\thetavec_1),$$
  where $C_{\text{cut},1}$ is a normalizing constant making $q_{\text{cut},1}(\thetavec_1)$ integrate to one.
  \item For $i=2,\dots, K$, initialize $q_{\text{cut},i}(\thetavec_i)=q_i(\thetavec_i)$.  
  \item Until convergence do:
  \begin{itemize}
  \item For $i=2,\dots, K$: \\
  $$q_{\text{cut},i}(\thetavec_i) \leftarrow C_i(q_{\text{cut}})^{-1} \left\{\prod_{j: i\in A(j)} m_{f_j\rightarrow \thetavec_i}^{\text{cut}}(\thetavec_i)\right\},$$
  where $C_i(q_{\text{cut}})$ is a normalizing constant making $q_{\text{cut},i}(\thetavec_i)$ integrate to one, and 
  $$m_{f_j\rightarrow \thetavec_i}^{\text{cut}}(\thetavec_i)=\exp
  \left(E_{-\thetavec_i}^{\text{cut}}(\log f_j(\theta_{A(j)}))\right),$$
  where $E_{-\thetavec_i}^{\text{cut}}(\cdot)$ denotes expectation with respect
  to $\prod_{l\neq i} q_{\text{cut},l}(\thetavec_l)$.  
  \end{itemize}
  \item Return $q_{\text{cut}}(\thetavec)=\prod_{i=1}^K q_{\text{cut},i}(\thetavec_i)$.
\end{enumerate}
  
\end{algorithm}

We call the variational message passing approach to evaluating 
the cut posterior ``cut variational message passing", and now elaborate further on
its use in the two module system of Figure \ref{two-module}.  Suppose we
use a factorized variational approximation, 
$$q(\thetavec)=q_{\varphivec}(\varphivec)q_{\etavec}(\etavec).$$
The coordinate ascent update for $q_{\varphivec}(\varphivec)$ is
a product of three messages, 
$$q_\varphivec^*(\varphivec) \propto m_{p(\varphivec)\rightarrow\varphivec}(\varphivec)\times 
m_{p(\zvec|\varphivec)\rightarrow\varphivec}(\varphivec)\times
m_{p(\wvec|\varphivec,\etavec)\rightarrow \varphivec}(\varphivec).$$
It is easy to see from our definition of the messages that 
$m_{p(\varphivec)\rightarrow\varphivec}(\varphivec)=p(\varphivec)$
and $m_{p(\zvec|\varphivec)\rightarrow\varphivec}(\varphivec)=p(\zvec|\varphivec)$, regardless of what $q_\etavec(\etavec)$ is.  Hence when we 
cut, the variational cut marginal for $\varphi$ is 
$$q_{\text{cut},\varphivec}(\varphivec) \propto p(\varphivec)p(\zvec|\varphivec),$$
and hence $q_{\text{cut},\varphivec}(\varphivec)$ is the
exact cut posterior for $\varphivec$.  Once the cut marginal for $\varphivec$ is fixed in this way, 
iteration is not required to find the remaining factor $q_{\text{cut},\etavec}(\etavec)$, which is given by the update \eqref{coord-ascent}.

For non-conjugate models, message passing methods can be difficult to employ
because the messages involve
expectations that cannot be expressed in closed form and are difficult
to compute.
In this case, methods such as 
nonconjugate variational message passing (NCVMP) \citep{knowles+m11} 
and Monte Carlo coordinate ascent variational inference
(MC-CAVI) \citep{ye+bdh19} can be used.  However, we instead use methods based on black box variational inference
and fixed form approximations described in Section 4.

\subsection{Illustrative example revisited}
To illustrate the cut posterior distribution based on variational message passing, we return
to the biased data example in Section~\ref{sec:simpleill}.  
We compare four distributions: the full posterior, cut posterior, and 
variational approximations to both.
The exact cut posterior distribution
has the density at~(\ref{cut-two-module}), which incorporates the cut of the
two module structure depicted in Figure~\ref{two-module}.  Both the full posterior density
and cut posterior density are multivariate normal, and can be computed analytically.

The coordinate ascent updates for $q_\varphi(\varphi)$ and $q_\eta(\eta)$ at~(\ref{message-passing}) are 
\begin{align*}
q^*_\varphi(\varphi) & \propto m_{p(\varphi)\rightarrow \varphi}(\varphi) \times m_{p(\zvec|\varphi)\rightarrow \varphi}(\varphi) \times m_{p(\wvec|\varphi,\eta)\rightarrow \varphi}(\varphi), \\
q^*_\eta(\eta) & \propto m_{p(\eta)\rightarrow \eta}(\eta) \times m_{p(\wvec|\varphi,\eta)\rightarrow \eta}(\eta).
\end{align*}
The expectations required to compute the messages above can be 
evaluated in closed
form, as outlined 
in Appendix~A. 
To specify the variational posterior for $\varphi$, we use the full variational marginal posterior at convergence $q^*_\varphi(\varphi)$
but remove the message $m_{p(\wvec|\varphi,\eta)\rightarrow \varphi}(\varphi)$ to get the variational cut marginal density for $\varphi$, denoted 
$q_{\text{cut},\varphi}(\varphi)$.  
Then we perform a single update for $q_\eta(\eta)$ to obtain
its optimal value with $q_\varphi(\varphi)$ fixed at $q_{\text{cut},\varphi}(\varphi)$ to obtain the cut variational marginal for $\eta$, denoted as $q_{\text{cut},\eta}(\eta)$.  Therefore, 
the final variational cut joint 
posterior is $q_{\text{cut}}(\varphi,\eta)=q_{\text{cut},\varphi}(\varphi)q_{\text{cut},\eta}(\eta)$.   

\begin{figure}[htbp]
\subcaptionbox{Exact (orange) and variational (blue) cut and full posterior densities of $\varphi$. The variational cut and exact cut posterior densities coincide.}[.48\linewidth]{\includegraphics[width=8cm]{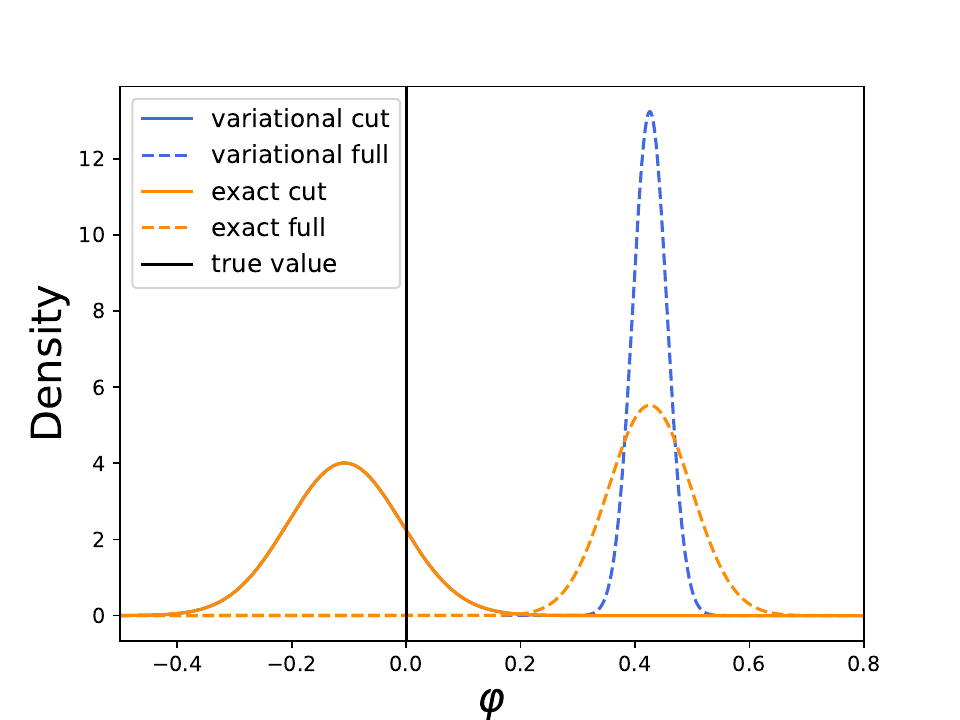}} \hspace{.04\linewidth}
\subcaptionbox{Exact (orange) and variational (blue) cut and full posterior densities of $\eta$.}[.48\linewidth]{\includegraphics[width=8cm]{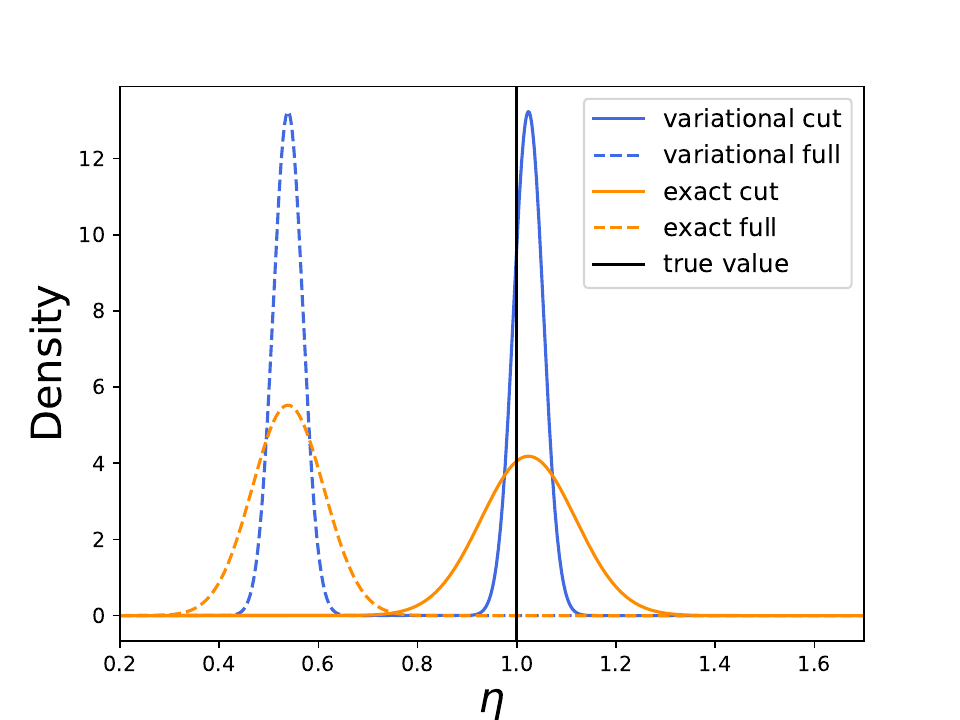}}
\caption{Comparison of marginal posterior estimates of $\varphi$ (left) and $\eta$ (right).}\label{biased-marginal}
\end{figure}

Figure~\ref{biased-marginal} shows the marginal posterior densities for the exact full posterior (dashed orange), the exact 
cut posterior (solid orange), the variational full posterior (dashed blue), the variational cut posterior (solid blue) and the true values (vertical 
black lines).
We make four observations.  First, for parameter $\varphi$ the variational and exact cut posterior are the same in this example, 
which is expected following the discussion in Section 3.3.   
Second, for both $\varphi$ and $\eta$, the full posteriors (whether exact or variational) provide poor
inference, in the sense that the true values $\varphi=0$ and $\eta=1$ lie out in the tails of these distributions.  
The misspecification in the biased data module has contaminated the inference for the full posterior and its variational
approximation.  Third, for both exact and variational cut posteriors, cutting feedback has mitigated
the problem of contamination.  For the cut distributions the true parameter values are in the high probability regions.
Fourth, the variational distributions underestimate uncertainty
compared to their exact counterparts.  

The cause of the underestimation of uncertainty for the variational methods 
is the lack of flexibility of the factorized form of the mean field
approximation.  This is shown in the comparison of the joint posterior distributions for $(\varphi,\eta)$ 
for the exact and variational posterior distributions in
Figure \ref{biased-contour}.  
The variational approximations assume independence between $\varphi$ and $\eta$.  Expressions for the exact posterior density, exact cut posterior density and
variational cut posterior density are given in Appendix A.  The expression 
for the variational cut posterior marginal for $\eta$ shows that it is obtained
as the full conditional density for $\eta$ conditioned on a certain point estimate
for $\varphi$.  This gives an intuitive interpretation of the underestimation
of uncertainty in this marginal density due to ignoring propagation of uncertainty.
The fixed form variational
approximations used in the next section address some of the
problems of mean field approximations 
by allowing greater flexibility for capturing the dependence structure.  

\begin{figure}[h]
	\centerline{\includegraphics[width=100mm]{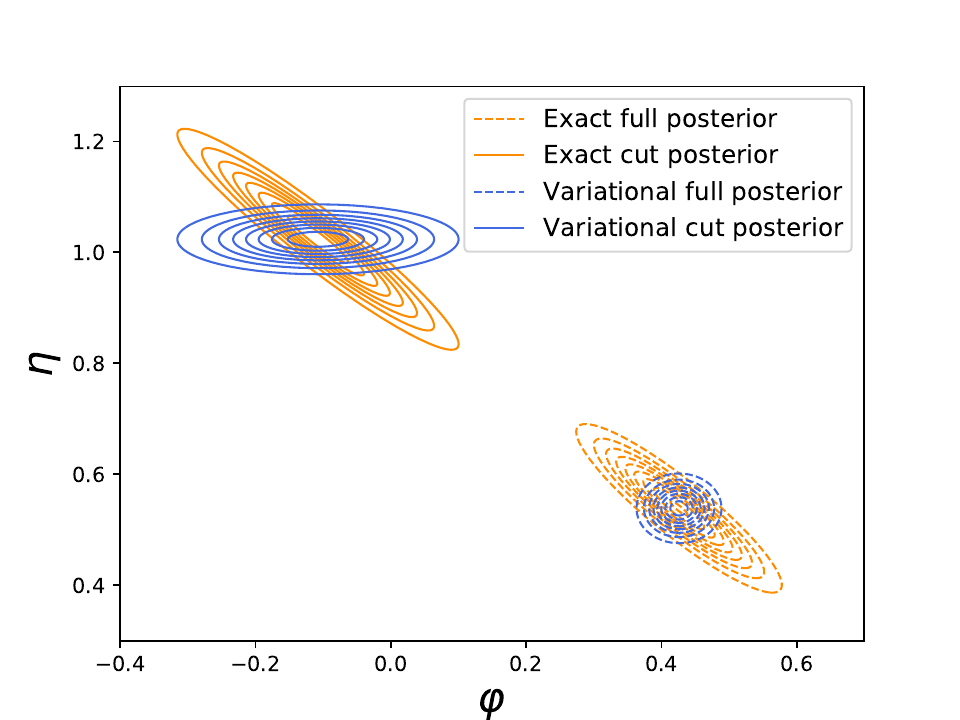}}
	\caption{\label{biased-contour} Contour plots of joint posterior distributions for biased data example. The variational approximations employ a mean field assumption.}
\end{figure}

\section{Defining the cut posterior through optimization}
In this section we consider the two module system depicted in 
Figure~\ref{two-module}. We first highlight why it can be difficult to use
MCMC methods to sample directly from the cut posterior of this 
system at~\eqref{cut-two-module}. However, we then show that the
cut posterior 
can be formulated
as the solution to a constrained optimization problem, which is attractive from an optimization-based 
perspective on Bayesian inference \citep{knoblauch+jd19}. Finally,
motivated by this observation,
we propose fixed form variational approximations that can be used
to compute inference in a computationally attractive fashion. 
In concurrent independent work 
\cite{carmona+n21} have also considered variational inference methods in modularized Bayesian analyses, and they
consider normalizing flows for constructing flexible posterior approximations.   However, the focus of their work
is on semi-modular inference \citep{carmona+n20} rather than cutting feedback.  

\subsection{The cut posterior}
The cut posterior density at~(\ref{cut-two-module}) can be 
expressed as 
\begin{align}
	p_{\text{cut}}(\varphivec,\etavec|\yvec) & = p(\varphivec|\zvec)p(\etavec|\varphivec,\yvec) \nonumber \\
	& \propto p(\varphivec)p(\zvec|\varphivec)\frac{p(\etavec|\varphivec) p(\yvec|\etavec,\varphivec)}{p(\yvec|\varphivec)} \nonumber \\
	& = p(\varphivec)p(\zvec|\varphivec)\frac{p(\etavec|\varphivec) p(\wvec|\etavec,\varphivec)}{p(\wvec|\varphivec)},
	\label{cut-two-module-two}
\end{align}
where we can obtain the last line from the previous one by observing that
$p(\etavec|\varphivec,\yvec)=p(\etavec|\varphivec,\wvec)$ and using the definition of these conditional densities.

Sampling from this posterior distribution using MCMC is difficult, because of the marginal likelihood term
$p(\wvec|\varphivec)$ that appears on the right-hand side of (\ref{cut-two-module-two}), which 
is often intractable. 
Current suggestions in the literature to sample from the cut posterior
are computationally intensive and may not sample exactly from the correct target.  
\citet{plummer15} observes that the cut posterior is often defined only implicitly through 
modification of an MCMC sampling scheme, and 
that the implied posterior density differs according to the proposal used in Metropolis-within-Gibbs schemes.  
The same difficulty has also been pointed out in \citet{woodard+cr13}. 
\citet{plummer15} suggests to use current cut software implementations with caution and performing appropriate
sensitivity analyses. He outlines computationally intensive multiple imputation \citep{little92} and tempered MCMC
approaches to sampling.  Some advanced MCMC methods for
sampling the cut posterior have been considered recently in \citet{jacob+oa17} and \citet{liu+g20}.   \cite{pompe+j21} consider a posterior bootstrap approach
to cut model computation having frequentist validity, and they develop some
asymptotic theory for cut posteriors.  \cite{frazier+n22} study theoretically the
behaviour of cut conditional posterior distributions, which is useful
for understanding how uncertainty propagates between modules.  
Their normal approximations of conditional cut posterior densities are 
useful for both computation and diagnostic purposes.

We now demonstrate that the cut posterior can be formulated
as the solution to a constrained optimization problem, which leads to
variational computational methods. 
Let $q_\varphi(\varphivec)=\int q(\varphivec,\etavec)\mbox{d}\etavec$ be the
marginal density in $\varphivec$ of the approximating density 
$q(\varphivec,\etavec)$. Then define the family of densities 
$${\cal F}_{\text{cut}}=\{q(\varphivec,\etavec):q_\varphi(\varphivec)=p(\varphivec|\zvec)\}.$$
${\cal F}_{\text{cut}}$ is the family of approximations preserving  
the exact cut posterior marginal in $\varphivec$.
The following lemma shows that the density in  ${\cal F}_{\text{cut}}$
that is the best approximation to the full (i.e. uncut) posterior density in
the Kullback-Leibler sense, is given by $p_{\text{cut}}(\etavec,\varphivec|\yvec)$ at~\eqref{cut-two-module}. 
Part~(b) of the lemma states a result which is used further below:  it demonstrates
that the KL divergence between the cut and full posterior is 
the KL divergence between their $\varphivec$ marginals. This result will be useful
later when we develop diagnostic methods for deciding whether or not to cut.\\

\begin{lemma}\label{lem:cutpost} With ${\cal F}_{\text{cut}}$ as defined above, 
\begin{itemize}
\item[(a)] $p_{\text{cut}}(\varphivec,\etavec|\yvec) = \arg \min_{q\in {\cal F}_{\text{cut}}} \text{KL}(q(\varphivec,\etavec) || p(\varphivec,\etavec|\yvec)).$
\item[(b)] $\text{KL}(p(\varphivec,\etavec|\yvec) || p_{\text{cut}}(\varphivec,\etavec|\yvec))=\text{KL}(p(\varphivec|\yvec) || p(\varphivec|\zvec)).$
\end{itemize}
\end{lemma}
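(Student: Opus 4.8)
The plan is to exploit the chain-rule decomposition of the Kullback--Leibler divergence with respect to the factorization $q(\varphivec,\etavec) = q_\varphi(\varphivec)\,q(\etavec\mid\varphivec)$ and, correspondingly, $p(\varphivec,\etavec\mid\yvec) = p(\varphivec\mid\yvec)\,p(\etavec\mid\varphivec,\yvec)$. For any $q\in{\cal F}_{\text{cut}}$ we may write
\[
\text{KL}(q(\varphivec,\etavec)\,\|\,p(\varphivec,\etavec\mid\yvec)) = \text{KL}(q_\varphi(\varphivec)\,\|\,p(\varphivec\mid\yvec)) + E_{q_\varphi}\!\left[\text{KL}(q(\etavec\mid\varphivec)\,\|\,p(\etavec\mid\varphivec,\yvec))\right].
\]
On ${\cal F}_{\text{cut}}$ the constraint $q_\varphi(\varphivec)=p(\varphivec\mid\zvec)$ fixes the first term to the constant $\text{KL}(p(\varphivec\mid\zvec)\,\|\,p(\varphivec\mid\yvec))$, so minimizing over $q\in{\cal F}_{\text{cut}}$ reduces to minimizing the second term, which is a $q_\varphi$-average of nonnegative conditional KL divergences. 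Each of these is minimized (and equals zero) by the choice $q(\etavec\mid\varphivec)=p(\etavec\mid\varphivec,\yvec)$, which is legitimate because it does not depend on $\zvec$ and the resulting joint density $p(\varphivec\mid\zvec)\,p(\etavec\mid\varphivec,\yvec)$ is exactly $p_{\text{cut}}(\varphivec,\etavec\mid\yvec)$ by~\eqref{cut-two-module}. This establishes part~(a).

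For part~(b), I would substitute $q=p(\varphivec,\etavec\mid\yvec)$ into the same chain-rule identity but with the roles of the two arguments of the KL divergence reversed, i.e.
\[
\text{KL}(p(\varphivec,\etavec\mid\yvec)\,\|\,p_{\text{cut}}(\varphivec,\etavec\mid\yvec)) = \text{KL}(p(\varphivec\mid\yvec)\,\|\,p(\varphivec\mid\zvec)) + E_{p(\varphivec\mid\yvec)}\!\left[\text{KL}(p(\etavec\mid\varphivec,\yvec)\,\|\,p_{\text{cut}}(\etavec\mid\varphivec,\yvec))\right],
\]
and then observe that the conditional of $p_{\text{cut}}$ given $\varphivec$ is precisely $p(\etavec\mid\varphivec,\yvec)$ (reading off~\eqref{cut-two-module}), so the second term vanishes identically, leaving the claimed identity.

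The only real care needed is the justification of the chain rule for KL divergence in this form — namely that $\text{KL}(q_{XY}\|p_{XY}) = \text{KL}(q_X\|p_X) + E_{q_X}[\text{KL}(q_{Y\mid X}\|p_{Y\mid X})]$ — which follows from writing $\log\frac{q_{XY}}{p_{XY}} = \log\frac{q_X}{p_X} + \log\frac{q_{Y\mid X}}{p_{Y\mid X}}$ and integrating, but requires the usual absolute-continuity caveats (if $q_\varphi$ is not absolutely continuous with respect to $p(\varphivec\mid\yvec)$ both sides are $+\infty$, so the identity still holds in the extended sense). I expect the main obstacle to be purely expository: stating the decomposition cleanly and noting that the cross term in part~(a) is a nonnegative quantity whose pointwise-in-$\varphivec$ minimizer is simultaneously achievable, so that the constrained minimizer is the product of the forced $\varphivec$-marginal with the unconstrained optimal conditional. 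Everything else is bookkeeping with the two explicit factorizations~\eqref{cut-two-module} and~\eqref{joint-two-module}.
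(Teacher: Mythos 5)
Your proposal is correct and follows essentially the same route as the paper: both decompose the KL divergence via the chain rule over the factorization $q(\varphivec,\etavec)=p(\varphivec|\zvec)q(\etavec|\varphivec)$ versus $p(\varphivec|\yvec)p(\etavec|\varphivec,\yvec)$, observe that the marginal term is constant on ${\cal F}_{\text{cut}}$, and minimize the averaged conditional KL to zero by setting $q(\etavec|\varphivec)=p(\etavec|\varphivec,\yvec)$, with part~(b) following from the same identity with the arguments reversed. The paper simply writes out the log-ratio splitting explicitly rather than invoking the chain rule by name, so there is no substantive difference.
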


\begin{proof}
Write $q(\varphivec,\etavec)\in {\cal F}_{\text{cut}}$ as $q(\varphivec,\etavec)=p(\varphivec|\zvec)q(\etavec|\varphivec)$, 
and $E_q(\cdot)$ for the expectation with respect to $q(\varphivec,\etavec)$.
Since
\begin{align*}
 \log \frac{q(\varphivec,\etavec)}{p(\varphivec,\etavec|\yvec)} & = 
   \log \frac{p(\varphivec|\zvec)q(\etavec|\varphivec)}{p(\varphivec|\yvec)p(\etavec|\varphivec,\wvec)} \\
    & = \log \frac{p(\varphivec|\zvec)}{p(\varphivec|\yvec)}+\log \frac{q(\etavec|\varphivec)}{p(\etavec|\varphivec,\wvec)},
\end{align*}
we have 
\begin{align}
\text{KL}(q(\varphivec,\etavec) || p(\varphivec,\etavec|\yvec)) & = E_q\left( \log \frac{q(\varphivec,\etavec)}{p(\varphivec,\etavec|\yvec)}\right) \nonumber \\
 & = E_q\left(\log \frac{p(\varphivec|\zvec)}{p(\varphivec|\yvec)}\right)
+ E_q\left(\log \frac{q(\etavec|\varphivec)}{p(\etavec|\varphivec,\wvec)}\right).  \label{kldexp}
\end{align}
The first term on the right-hand side of (\ref{kldexp}) does not depend on $q(\varphivec,\etavec)$.  Hence the 
$q(\varphivec,\etavec)$ minimizing the above expression minimizes the second term, 
\begin{align}
 E_q\left(\log \frac{q(\etavec|\varphivec)}{p(\etavec|\varphivec,\wvec)}\right) & = 
  \int \int \log \frac{q(\etavec|\varphivec)}{p(\etavec|\varphivec,\wvec)} q(\etavec|\varphivec) d\etavec\,d\varphivec.  \label{kld-term}
\end{align}
The inner integral on the right-hand side of (\ref{kld-term}) is the KL-divergence between $q(\etavec|\varphivec)$ and
$p(\etavec|\varphivec,\wvec)$.  The integral attains the minimum possible value of zero if we choose, for every $\varphivec$,
$q(\etavec|\varphivec)=p(\etavec|\varphivec,\wvec)$.  This choice corresponds to the cut posterior distribution.  This establishes
part (a) of the lemma. 

Similar to the above argument, we can write $\text{KL}(p(\varphivec,\etavec|\yvec) || p_{\text{cut}}(\varphivec,\etavec|\yvec))$ as the sum of $\text{KL}(p(\varphivec|\yvec) || p(\varphivec|\zvec))$ and the expectation with 
respect to $p(\varphivec|\yvec)$ of $\text{KL}(p(\etavec|\varphivec,\wvec) || p_{\text{cut}}(\etavec|\varphivec,\wvec))$, with
the latter term being zero, which proves (b).
\end{proof}
  
\subsection{Fixed form approximations}
Lemma~\ref{lem:cutpost} motivates
a simple variational approach to approximating the cut posterior.
It employs a variational family of fixed form
approximations with a finite set of parameters 
called ``variational parameters''.

Consider an approximation to $p(\varphivec|\zvec)$ of the form $q_{\widetilde{\lambda}}(\varphivec)$, with parameters 
$\widetilde{\lambdavec}$.  The most accurate approximation in this family (in the Kullback-Leibler sense) 
has parameter values
$$\widetilde{\lambdavec}^*=\arg\max_{\widetilde{\lambdavec}} \int \log \frac{p(\varphivec)p(\zvec|\varphivec)}{q_{\widetilde{\lambda}}(\varphivec)} q_{\widetilde{\lambda}}(\varphivec) d\varphivec.$$  
Solving a variational optimization targeting the exact cut marginal posterior
for $\varphivec$ ensures that the second module data cannot influence the
variational cut marginal posterior density for $\varphivec$.

Writing $\lambdavec=(\widetilde{\lambdavec}^\top,\breve{\lambdavec}^\top)^\top$,
we can then approximate the optimization over ${\cal F}_{\text{cut}}$ in part~(a) of Lemma~\ref{lem:cutpost} by an optimization
over the family 
$$\widetilde{{\cal F}}_{\text{cut}}=\{ q_{\lambda}(\varphivec,\etavec): q_{\lambda}(\varphivec,\etavec)=q_{\widetilde{\lambda}^*}(\varphivec)q_{\breve{\lambda}}(\etavec|\varphivec)\},$$
where the fixed marginal $q_{\widetilde{\lambda}^*}(\varphivec)$ from the first stage optimization approximates
$p(\varphivec|\zvec)$, and we have approximated the conditional posterior distribution for $\etavec$ given $\varphivec$ by
some parametric form $q_{\breve{\lambda}}(\etavec|\varphivec)$ with variational parameters
$\breve{\lambdavec}$.  
The optimization in part~(a) of Lemma~\ref{lem:cutpost} is an optimization targeting the full posterior distribution, so there is no
need to compute the intractable term $p(\wvec|\varphivec)$ in (\ref{cut-two-module-two}).  It is an ordinary variational optimization, where the approximation
to the cut posterior distribution arises from the choice of the variational family
and not through changing the target for the approximation.  
Finally, defining
$$\breve{\lambdavec}^*=\arg\max_{\breve{\lambdavec}} \int \log \frac{p(\varphivec)p(\etavec|\varphivec)p(\zvec|\varphivec)p(\wvec|\varphivec,\etavec)}{q_{\widetilde{\lambda}^*}(\varphivec)q_{\breve{\lambda}}(\etavec|\varphivec)} 
q_{\widetilde{\lambda}^*}(\varphivec)q_{\breve{\lambda}}(\etavec|\varphivec) \,d\varphivec d\etavec,$$
gives an approximation $q_{\lambda^*}(\varphivec,\etavec)=q_{\widetilde{\lambda}^*}(\varphivec)q_{\breve{\lambda}^*}(\etavec|\varphivec)$ to~\eqref{cut-two-module} with calibrated
variational parameters $\lambdavec^*=((\widetilde{\lambdavec}^*)^\top,(\breve{\lambdavec}^*)^\top)^\top$.   $\breve{\lambdavec}^*$ has the interpretation of minimizing the KL divergence of the approximation
to the full posterior distribution within the family 
$\widetilde{{\cal F}}_{\text{cut}}$.

There is another way of viewing the two-stage optimization procedure above, 
which was recently discussed in \cite{carmona+n21}.  
The authors show that
the minimum KL divergence between a given density $q_{\widetilde{\lambda}}(\varphivec)q_{\breve{\lambda}}(\etavec|\varphivec)$ and the densities
in ${\cal F}_{\text{cut}}$ is $\text{KL}(q_{\widetilde{\lambda}}(\varphivec)||p(\varphivec | \zvec))$, 
with this not depending on $\breve{\lambdavec}$.   
This formulates the first stage of the optimization in our approach in a 
similar way to the second, as an optimization over a family of joint
densities for $(\varphivec,\etavec)$.  This allows the entire two stage procedure 
to be seen as optimizing a well-defined objective function 
(\cite{carmona+n21}, Proposition 9) 
by considering a certain family of variational objectives indexed by a hyperparameter, and taking the hyperparameter to a limit.  The results
of \cite{carmona+n21} are in the context of semi-modular inference, 
with cutting feedback as a special case.

While there is a wide range of fixed form densities that can be used
for the variational family, a popular choice for continuous-valued $\thetavec=(\varphivec^\top,\etavec^\top)^\top$ is to assume
\begin{align*}
  q_\lambda(\varphivec,\etavec) & = q_{\widetilde{\lambda}}(\varphivec)q_{\breve{\lambda}}(\etavec|\varphivec)
\end{align*}
is the density of a $N(\muvec,\bm{\Sigma})$ distribution.
If some parameters are constrained, then they can be
transformed to the real line so that they have the same support as the
approximation.
Set $\bm{\Sigma}=\bm{C}\bm{C}^\top$, where $\bm{C}$ is the lower-triangular Cholesky factor,
and partition $\muvec$ and $\bm{C}$ to conform with $\thetavec$, so that 
$\muvec=(\muvec_\varphi^\top,\muvec_\eta^\top)^\top$, 
and
$$\bm{C}=\left[\begin{array}{cc}
  \bm{C}_{\varphi} & \bm{0} \\
  \bm{C}_{\varphi \eta} & \bm{C}_{\eta} 
  \end{array}\right],$$
where $\bm{C}_{\varphi}$ and $\bm{C}_{\eta}$ are both lower triangular.
Here
$\widetilde{\lambdavec}=(\muvec_\varphi^\top,\text{vech}(\bm{C}_{\varphi})^\top)^\top$, where $\text{vech}$ is
the half-vectorization operator which stacks the lower-triangular elements of a matrix column-by-column, and  
$\breve{\lambdavec}=(\muvec_\eta^\top,\text{vec}(\bm{C}_{\varphi \eta})^\top,\text{vech}(\bm{C}_{\eta})^\top)^\top,$
where $\text{vec}$ is the vectorization operator.
Optimization of a Gaussian variational approximation parametrized by its mean vector and lower-triangular Cholesky factor of
its covariance matrix via stochastic gradient methods is considered
by \cite{titsias+l14} and \cite{kucukelbir17} among others.  For the second-stage optimization it is 
straightforward to simply fix the variational parameters $\widetilde{\lambdavec}=\widetilde{\lambdavec}^*$ and 
only optimize over $\breve{\lambdavec}$. 
We do not discuss implementation details
of methods for lower bound gradient estimation for Gaussian approximations, as descriptions 
of this can be found elsewhere, such as in the references given above.  

It is straightfoward to adopt more flexible fixed form approximations for $q_\lambda$ within this framework. 
A simple choice is a Gaussian copula, where a
Gaussian approximation is enriched through learnable marginal transformations~\citep{han+ldc16,smith+ln19}.  
More elaborate variational families can be considered, such as those based on 
mixtures of exponential families (\cite{salimans+k13}, \citet{wu+ks19} among others) or normalizing
flows \citep{papamakarios19}.  
Another direction for obtaining a more flexible approximation 
is to combine variational inference methods with MCMC, and this is considered in Section~\ref{eg:agri} using a method described by \cite{loaiza-maya+snd20}.  In that example
some of the unknowns are discrete, so methods based on continuous 
approximating families do not suffice.  

While diagnosing the adequacy of a particular approximating family can
be challenging, there are some diagnostics that can help.  
\cite{yao+vsg18} consider diagnostics based on Pareto-smoothed
importance sampling corrections and quantile-based simulation based 
calibration, where variational approximations are computed repeatedly
for simulated data.  A moment-based alternative to quantile-based 
calibration is considered in \cite{yu+ntk21}.  The adequacy of an approximation
depends on the use to be made of it, which should inform the way accuracy
is measured.  


\section{Model checks for cutting}

\subsection{Conflict checks}
While both defining and approximating the cut posterior can be challenging, 
another problem in practice is to decide whether or not
to cut.  
We consider now variational implementations of Bayesian model checks that guide this decision. 
Discussion of Bayesian model checking generally can be found
in \cite{gelman+ms96} and \cite{evans15}, while \cite{presanis+osd13}
discuss conflict checking, which may include evaluation of cut posteriors.
\cite{jacob+mhr17} consider a predictive decision-theoretic perspective on 
the decision to cut, and \cite{carmona+n20} consider similar methods 
for their semi-modular inference method and connections with coherent loss-based inference
\citep{bissiri+hw16}.
The conflict-checking approach focuses on the interpretation of the inference
and is complementary to predictive methods. In contrast to 
previous approaches, the checks we propose have two practical advantages. 
The first is that they do not require the specification of non-informative priors in their implementation. The second is that the variational inference
framework can simplify computations greatly.


\subsection{Two module system}
Again, we outline the approach for 
the simplified two module system depicted in 
Figure~\ref{two-module}. The posterior distribution for $\thetavec=(\varphivec^\top,\etavec^\top)^\top$ after observing only $\zvec$ is
\begin{align*}
 p(\varphivec,\etavec|\zvec) & \propto p(\varphivec)p(\etavec|\varphivec)p(\zvec|\varphivec) \\
  & \propto p(\varphivec|\zvec)p(\etavec|\varphivec).
\end{align*}
This can be thought of as a prior that is further updated by subsequent data $\wvec$, so that
\begin{align}
  p(\varphivec,\etavec|\yvec) & \propto p(\varphivec|\zvec)p(\etavec|\varphivec) p(\wvec|\etavec,\varphivec). \label{post-expression}
\end{align}
The $\varphivec$ marginal of (\ref{post-expression}) is $p(\varphivec|\yvec)$.  Lemma 1 (b) shows that the
KL divergence between the cut and full posterior distributions is the KL divergence between $p(\varphivec|\yvec)$ and
$p(\varphivec|\zvec)$.  Hence the KL divergence between the cut and full posterior distributions is a prior-to-posterior
KL divergence for $\varphivec$, when $\zvec$ is known when formulating the prior, and where we consider 
Bayesian updating by the data $\wvec$.  

\cite{nott+wee20} consider conflicts between the prior and posterior using prior-to-posterior KL divergences as a checking
statistic.  Such conflicts occur when the prior puts all its mass in the tails of the likelihood function, so that information
in the prior and data are contradictory.  
We can use such a conflict check to see whether the possibly misspecified
model for $\wvec$ contaminates inference about $\varphivec$.  \cite{nott+wee20} also considered the use of
variational approximations to facilitate implementation of these checks.  

Variational inference methods are
useful for two main reasons here.  First, 
the checks need to be calibrated based on a tail probability
for some reference distribution. Approximating the tail probability involves approximating the posterior distribution
many times for data simulated under the reference distribution; fast variational inference methods
are helpful for this. Second, if the variational approximations to $p(\varphivec|\zvec)$ and 
$p(\varphivec|\yvec)$ are both in exponential family form, then there
is a closed form expression for the KL divergence, which further facilitates computation.  

The method in \cite{nott+wee20} corresponds here to computing the model checking statistic
\begin{align}
 T(\wvec|\zvec) & = \text{KL}\left(p(\varphivec|\yvec) || p(\varphivec|\zvec)\right), \label{klcheck}
\end{align}
which we calibrate through the tail probability
\begin{align}
  p & = \mbox{Pr}\left(T(\mW|\zvec) \geq T(\wvec|\zvec)\right),  \label{tailprob}
\end{align}
where $T(\wvec|\zvec)$ is an observed quantity, while $\mW$ is 
a random vector, 
$$\mW\sim p(\wvec|\zvec)=\int \int p(\wvec|\etavec,\varphivec)p(\varphivec,\etavec|\zvec)d{ \varphivec}\,d\etavec.$$
The tail probability (\ref{tailprob}) gives a measure of how surprising the change is from $p(\varphivec|\zvec)$
to $p(\varphivec|\yvec)$, where the size of the change is calibrated according to what is expected for data $\mW$ simulated
under $p(\wvec|\zvec)$.  By Lemma 1 (b), this check also has the interpretation of how surprising the KL divergence 
between the cut and full posterior distribution is under the same calibration.  Hence the tail probability (\ref{tailprob}) 
is a measure of incompatibility between the inference about $\varphivec$ conditional on $\zvec$, and conditional
on the full data $\yvec=(\wvec,\zvec)$.  

Calculating the tail probability (\ref{tailprob}) is difficult.  Similar to \cite{nott+wee20}, we propose to first replace $T(\wvec|\zvec)$
with
$$\widetilde{T}(\wvec|\zvec)=\text{KL}(q(\varphivec|\yvec) || q(\varphivec|\zvec)),$$
where $q(\varphivec|\yvec)$ and $q(\varphivec|\zvec)$ are variational approximations of $p(\varphivec|\yvec)$
and $p(\varphivec|\zvec)$ respectively.  These approximations will be chosen to have exponential family
forms -- which are popular choices in practice -- for which the KL divergence has a closed form.  
If $q(\varphivec|\yvec)$ and
$q(\varphivec|\zvec)$ are both multivariate normal with means and covariance matrices
$\muvec_{\yvec},\mSigma_{\yvec}$ and $\muvec_{\zvec},\mSigma_{\zvec}$ respectively, then
\begin{align}
\widetilde{T}(\wvec|\zvec) & = \frac{1}{2}\left\{\log \frac{|\mSigma_z|}{|\mSigma_y|}+\mbox{tr}(\mSigma_z^{-1}\mSigma_y)-d+(\muvec_z-\muvec_y)^\top \mSigma_z^{-1}(\muvec_z-\muvec_y)\right\},  \label{normal-KL}
\end{align}
where $d$ is the dimension of $\varphivec$.
Using this approximation to the test statistic (\ref{klcheck}), we draw $S$ samples $\mW^{(i)}$, $i=1,\dots, S$, independently
from $p(\wvec|\zvec)$, and approximate (\ref{tailprob}) by
$$\widetilde{p} = \frac{1}{S} \sum_{i=1}^S \mathbbm{1}(\widetilde{T}(\mW^{(i)}|\zvec) \geq \widetilde{T}(\wvec|\zvec)),$$
where the indicator function $\mathbbm{1}(A)=1$ if $A$ is true, and zero otherwise.  Generation a draw $\mW$ from $p(w|z)$ is done by simulating
$\varphivec, \etavec$ from $p(\varphivec,\etavec|\zvec)$, 
and then for these parameter values 
drawing $\mW$ from $p(\wvec|\varphivec,\etavec)$.  If simulating
from $p(\varphivec,\etavec|\zvec)$ is intractable, we can use its variational
approximation instead.

How well the test statistic $\widetilde{T}(\wvec|\zvec)$ 
approximates $T(\wvec|\zvec)$ is unknown in general, and the tail probability $\widetilde{p}$ may not correspond closely to that obtained from the 
check using $T(\wvec|\zvec)$.  However, this does not really matter.  
The check using the test statistic
$\widetilde{T}(\wvec|\zvec)$ is an ordinary Bayesian model 
check for a valid checking statistic.  
What statistic to use in Bayesian model checking is a free choice of the analyst, 
although it should have a logical motivation.  This is the case here, 
with  $\widetilde{T}(\wvec|\zvec)$ being an informative measure of how
far apart the cut and full posterior distributions are. 

\subsection{Illustrative example revisited}

We revisit the biased data example considered in Sections 2 and 3 to demonstrate our conflict checks.
To perform the check, we first draw $S$ samples $\mW^{(i)}, i = 1,\dots, S,$ independently from $p(\wvec|\zvec)=\int \int p(\wvec|\eta,\varphi)p(\varphi,\eta|\zvec)d\varphi\,d\eta$
as follows.  For $i=1,\ldots,S$, 
\begin{itemize}
	\item Draw  $\varphi^{(i)}$ from $p(\varphi|\zvec)$, which has a normal distribution $\text{N}\left(\frac{n_1 \bar{z}}{n_1 + \delta_1}, \frac{1}{n_1+\delta_1}\right)$, where $\bar{z}$ is the sample mean of $\zvec$.
	\item Draw $\eta^{(i)}$ from the prior $p(\eta)$, which is $\text{N}(0, \delta_2^{-1})$.
	\item Draw $\mW^{(i)} = (w^{(i)}_{1}, \dots, w^{(i)}_{n_2})$ from $p(\wvec|\varphi^{(i)}, \eta^{(i)})$, for which components are independent $\text{N}(\varphi^{(i)} + \eta^{(i)}, 1)$.
\end{itemize}

Write $\mY^{(i)}=(\zvec,\mW^{(i)})$.  
In order to compute the test statistic values $\widetilde T(\wvec|\zvec)$ and $\widetilde{T}(\mW^{(i)}|\zvec)$, $i=1,\dots, S$, we first obtain marginal variational posterior approximations for $\varphi$ conditional on each of $\zvec$, $\mY^{(i)}$ and $\yvec$, denoted as $q(\varphi|\zvec)$, $q(\varphi|\mY^{(i)})$ and $q(\varphi|\yvec)$, respectively. From~(\ref{normal-KL}),
these approximations are normal, so that 
\begin{align*}
	\widetilde{T}(\wvec|\zvec) &=\frac{1}{2} \left\{ \log \frac{\sigma^2_{\zvec}}{\sigma^2_{\yvec}} + \frac{1}{\sigma^2_{\zvec}}(\mu_{\yvec} - \mu_{\zvec})^2 \right\},\\
	\widetilde{T}(\mW^{(i)}|\zvec) &=\frac{1}{2} \left\{ \log \frac{\sigma^2_{\zvec}}{\sigma^2_{\mY^{(i)}}} + \frac{1}{\sigma^2_{\zvec}}(\mu_{\mY^{(i)}} - \mu_{\zvec})^2 \right\}, \quad i=1,\dots, S,
\end{align*}
where $\mu_{\zvec},\mu_{\mY^{(i)}}, \mu_{\yvec}$ and $\sigma^2_{\zvec}, \sigma^2_{\mY^{(i)}}, \sigma^2_{\yvec}$, $i=1,\dots, S$, are the means and variances of the densities $q(\varphi|\zvec), q(\varphi|\yvec_i),$ and $q(\varphi|\yvec_{obs})$.

Figure~\ref{test_statistics} shows the density plot of the test statistics $\widetilde{T}(\mW_i|\zvec), i =1,\dots, S$, with the black 
vertical line being $\widetilde{T}(\wvec|\zvec)$.   We have used a broken $x$-axis in the plot in order to allow
the observed test statistic value and density of the reference distribution to be shown on the same plot.  
The reference density is estimated using $S=100$ simulated datasets.
The observed test statistic is far into the tail of the reference distribution, which
indicates that the difference between cut and full variational posterior densities for the observed data is very large 
compared to what is expected under the prior predictive density for $\wvec$ given $\zvec$.  
This supports a decision to use the cut posterior for inference in this example.
\begin{figure}[h]
	\centerline{\includegraphics[width=90mm]{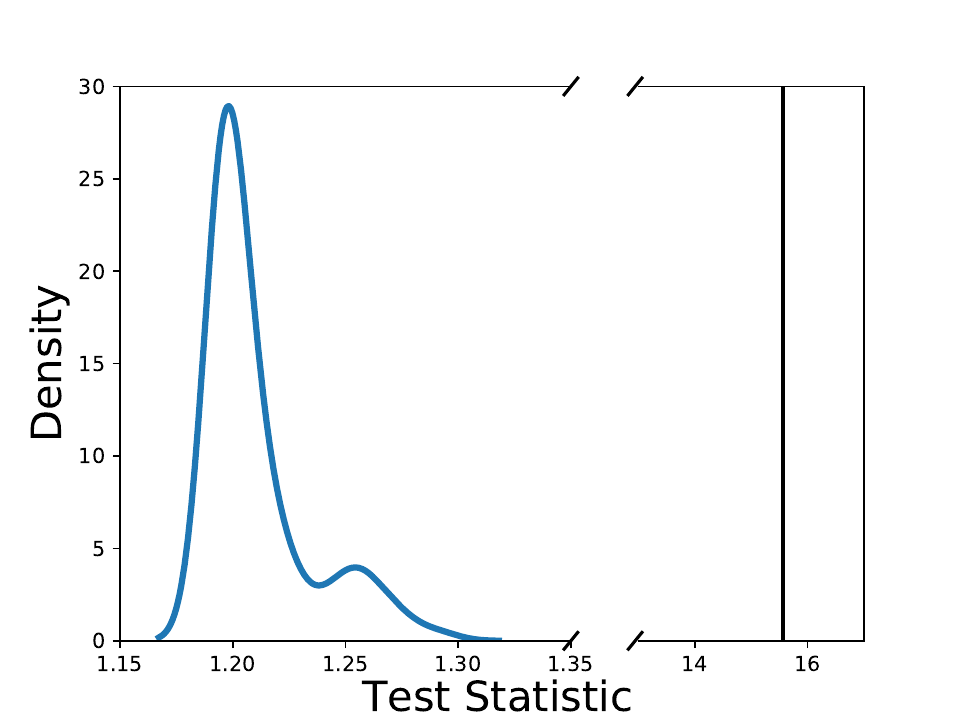}}
	\caption{\label{test_statistics} Observed test statistic value (black vertical line) and estimated reference density
		of the test statistic based on $S=100$ simulations for conflict checking for the biased data example.  A broken $x$-axis is
		used so that the reference density and test statistic can be shown on the same graph.}
\end{figure}

\section{Empirical Examples}
In this section we apply our methodology to two real data analyses
where cutting feedback affects inference substantially.  Code to reproduce
the results can be found at \url{https://github.com/Yu-Xuejun/Variational-Cutting-Feedback}.

\subsection{Human papillomavirus and cervical cancer incidence}
We consider an epidemiological example examined by \cite{plummer15} and motivated
by the study of~\cite{maucort2008international}.  The example considers the relationship between human
papillomavirus (HPV) prevalence and the incidence of cervical cancer, 
and the model used corresponds to a two module system of the type 
described earlier.
In this problem there is data $\zvec=(z_1,\dots, z_{13})^\top$ for
13 countries from an international survey 
where $z_i$ is the number
of people infected with high-risk HPV in a sample of size $n_i$.  There is also data
$\wvec=(w_1,\dots, w_{13})^\top$, for which $w_i$ is the number of cervical cancer cases diagnosed
during $T_i$ years of follow-up.  

The model is
\begin{align*}
&z_i \sim \text{Binomial}(n_i, \gamma_i),\quad \gamma_i \sim \text{Beta}(1,1), \quad \varphi_i=\text{logit}(\gamma_i),\\
& w_i \sim \text{Poisson}(\mu_i),\quad \mu_i = T_i\exp(\eta_1 + \eta_2 \gamma_i), \quad \eta_1,\eta_2 \sim \text{N}(0,10^3).
\end{align*}

Here we are particularly interested in $\eta_2$, which measures the relationship between
HPV prevalence and cancer incidence.  However, there is a concern that the Poisson regression model
is misspecified, and that this may contaminate inferences for both $\gammavec=(\gamma_1,\dots, \gamma_{13})^\top$
and $\etavec=(\eta_1,\eta_2)$.  We write $\varphivec=(\varphi_1,\dots, \varphi_{13})^\top$. 
Although the parameter $\eta_2$ is in the misspecified module,
having a useful interpretation for it crucially depends on the parameter $\gamma_i$ from
the correctly specified module having its intended interpretation.  

First, consider a prior-data conflict check to help decide whether it is useful to cut.
In Figure \ref{hpv-check}, the blue curve is a kernel density estimate obtained from 
simulated test statistic values  $\widetilde{T}(\mW^{(i)}|\zvec)$, $i =1,\dots, 100$, 
where $\mW^{(i)}$ are approximate
simulations from $p(\wvec|\zvec)$, where the variational
approximation for $p(\varphivec|\zvec)$ has been used.  
The black vertical line is the observed test statistic value $\widetilde{T}(\wvec|\zvec)$.
Fixed form normal approximations are used for the posterior distribution of
$(\varphivec^\top,\etavec^\top)$ in obtaining the test statistics. 
Because our variational approximations are normal,
the KL divergence terms for computing the test statistics can be
evaluated in closed form.  
The observed test statistic is larger than all simulated values of the test statistic, so that the difference
between cut and full posterior distributions is large under the reference distribution.  In Appendix~B, Figure \ref{hpv-check-sim} is similar to
Figure \ref{hpv-check}, except that the data are simulated using
the posterior mean values obtained for the real data.  Because the
data are simulated, there is no misspecification for the Poisson regression
in the second module, and Figure \ref{hpv-check-sim} 
demonstrates that the cut and full posterior
distributions are not surprisingly different under the reference distribution
in this case.
\begin{figure}[ht]
	
	\centering
	{\centering\includegraphics[width=90mm]{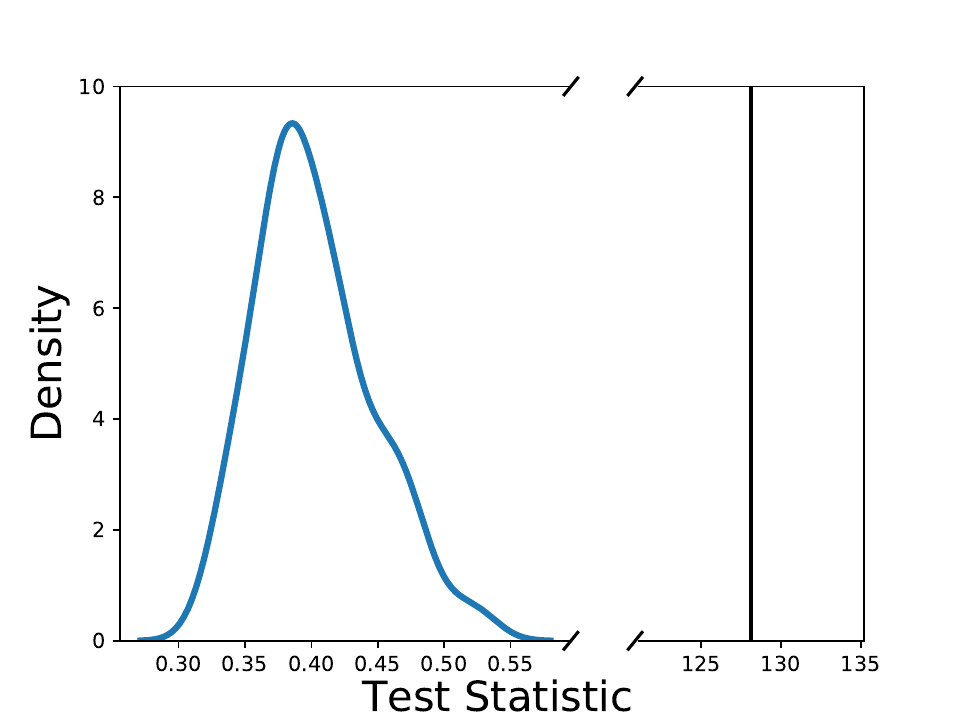}
		
	}
	\caption{Observed test statistic value (black vertical line) and estimated reference density
	for the test statistic based on $100$ simulations for conflict check for the HPV example.  A broken $x$-axis is
	used so that the reference density and test statistic can be shown on the same graph.}
	\label{hpv-check}
\end{figure}

In Figure \ref{hpv-joint}, the exact full and cut posterior distributions are compared with the variational full and cut
posterior distribution approximations.
Draws from the full (i.e. uncut) posterior are shown in blue, and
draws from the
cut posterior in orange.
The left panel 
shows the variational posteriors, while the 
right panel shows the MCMC (i.e. exact) posteriors.
We make two observations. First, the full and cut posterior distributions are quite different, whether computed using MCMC or 
variational inference methods. 
This is consistent with the result of the conflict check.   
Second, the variational approximations are similar to the MCMC estimates of the 
cut and full posterior distributions, although for the full posterior distribution there is a larger difference between
the two approaches.  

It takes 14 minutes to run two-stage Monte Carlo sampling for cut posterior
computation following the multiple
imputation approach of \cite{plummer15}.   
Samples of $\varphivec$ given $\zvec$ are drawn directly from
the exact posterior distribution in a first stage.  This can be done 
because the beta priors
on the probabilities $\gamma_i$ are conjugate to the binomial likelihood terms. 
In the second
stage, for each first stage sample for $\varphivec$, 
a single chain of length $2,000$ is run with the last value drawn 
taken as an approximate sample from the conditional posterior
for $\etavec$ given the fixed value of $\varphivec$.   
The MCMC runs are done using Stan \citep{carpenter-etal-2017}.  
For variational approximation, the computation time is 
68 seconds, using 10,000 stochastic gradient iterations in stage one, 
and 100,000 stochastic gradient iterations in stage two.
Computations were done on an Intel i7-11800H CPU with 8 cores.

\begin{figure}[ht]
	
	\centering
	{\centering\includegraphics[scale=0.5]{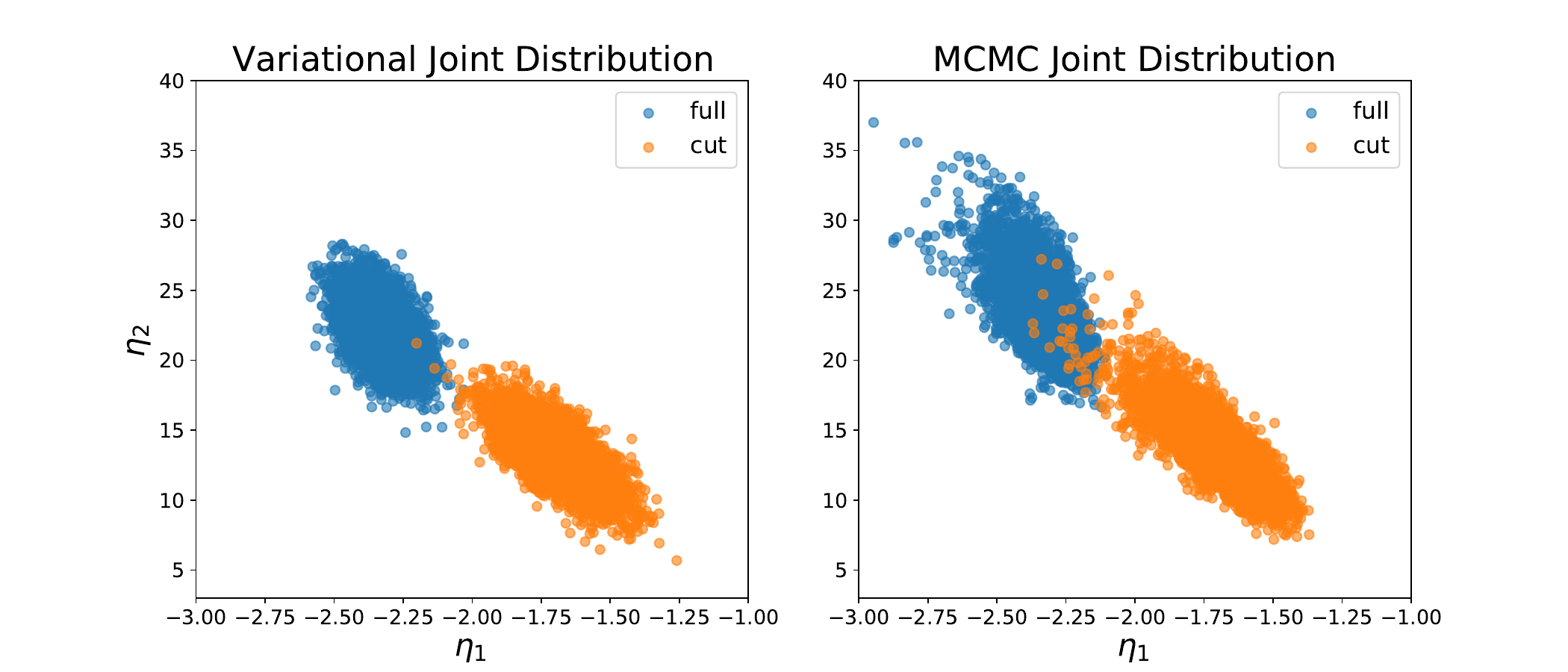}
		
	}
	\caption{Joint full and cut posteriors of $\eta= (\eta_1,\eta_2)$ obtained by fixed form variational approximation and MCMC.}
	\label{hpv-joint}
\end{figure}

\subsection{Agricultural extensification}\label{eg:agri}

\subsubsection{Two module system}
Our last example considers a two module system described in \cite{styring17} and \cite{carmona+n20}. 
The data in this example consists of two parts. The first contains measurements relating
to agricultural practices and productivity from archaeological sites in Northern Mesopotamia, and the second part contains similar modern data obtained under controlled experimental conditions.  
An imputation model is used to account for missingness in the archaeological data.  One of the parameters in the imputation model, which relates site size to manuring levels, 
is of primary interest.  The interpretation of this parameter can 
provide evidence for an ``extensification hypothesis'' of larger land areas being cultivated with
lower manure/midden inputs to support growing urban populations.  
However, the imputation model is rather crude, and it is desirable to cut feedback to ensure
that the interpretation of the key parameter is not influenced by this inadequacy.

\begin{figure}[h]
	\centerline{\includegraphics[width=160mm]{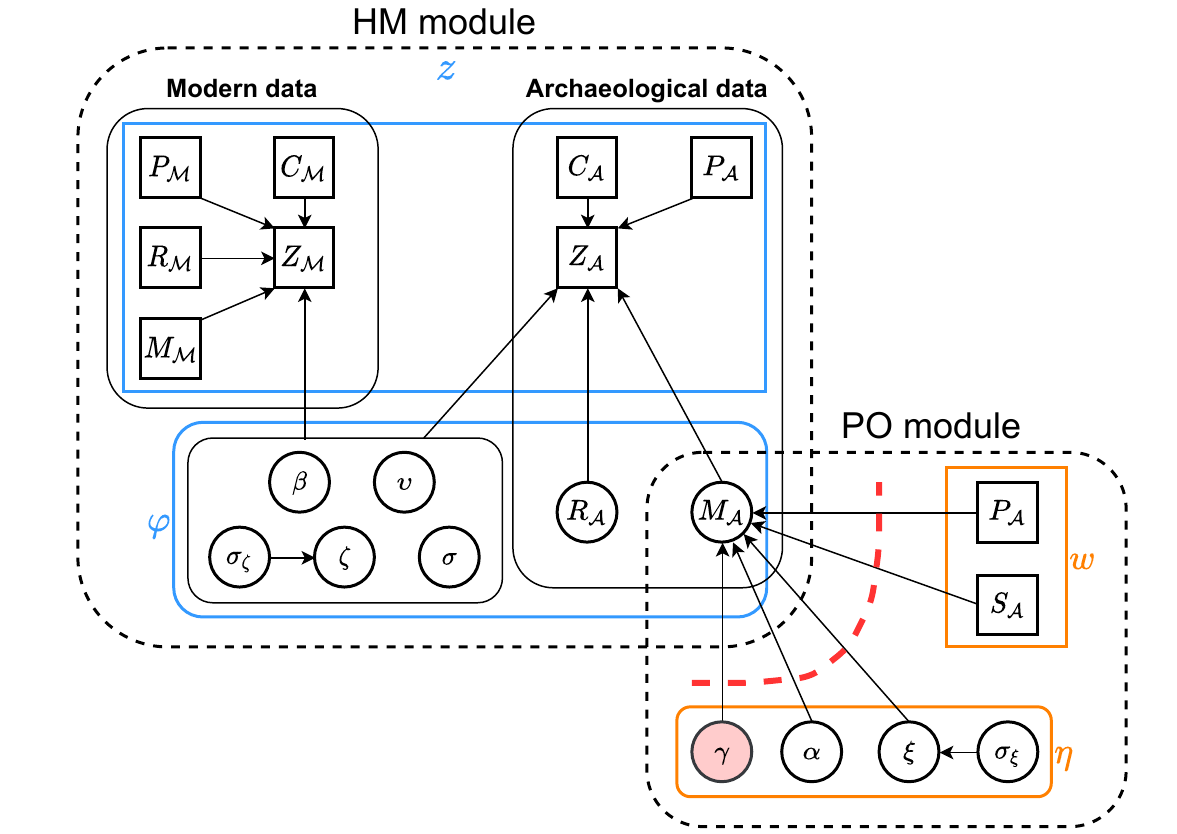}}
	\caption{\label{agri_fig_1} Graphical representation of the two module model for agricultural extensification example. 
		The red dashed line indicates
		the cut. Squares are known data, while circles are unknown parameters or missing data. The pink circle ($\gamma$) denotes the parameter of interest.}
\end{figure}

The archaeological dataset contains variables Nitrogen Level ($Z_{\mathcal{A}}$), Crop Category ($C_{\mathcal{A}}$), Site Location ($P_{\mathcal{A}}$) and Site Size ($S_{\mathcal{A}}$). The modern dataset contains the same four variables denoted 
with subscript $\mathcal{M}$, along with Rainfall ($R_{\mathcal{M}}$) and Manure Level ($M_{\mathcal{M}}$). The latter is an ordinal variable with three possible values $m_{low}< m_{med} < m_{high}$.

A model with two modules is considered here. 
The first module (labeled the ``HM module'' here) is a Gaussian linear regression that pools both datasets and has dependent variable Nitrogen Level. There are fixed
effects in Rainfall and Manuring Level, a random effect in Site Location and 
a different error variance depending on Crop Category.
In this HM module, Rainfall ($R_{\mathcal{A}}$) and Manure level ($M_{\mathcal{A}}$) in the archaeological data are both missing. 
The second module is a proportional odds model (labeled the ``PO module'' here) for imputation of missing values in the archaeological data, with an ordinal response Manure Level. There is
a fixed effect in Site Size with coefficient $\gamma$, a random effect in Site Location and a logit link function. If $\gamma<0$, this provides statistical support for the extensification hypothesis.
Appendix~C details the two modules, along with their likelihood functions
and the priors employed. A graphical depiction of the model, similar to Figure~6 in \cite{carmona+n20}, is shown in Figure~\ref{agri_fig_1}.

\subsubsection{Cutting feedback}
For the full posterior, the PO module plays the role of imputing the missing Manure Level values ($M_{\mathcal{A}}$). However, this model is thought 
to be misspecified, so we cut feedback from the PO module when imputing $M_{\mathcal{A}}$, so that any misspecification does not affect  
interpretation of the parameter of primary interest $\gamma$.  

The notation of a two module system outlined in Section~\ref{sec:explict}
is adopted.  This is depicted in Figure \ref{agri_fig_1}, where for the HM module
the data is denoted as $\zvec$ and unknowns as $\varphivec$,
while for the PO module the site size and location covariates are denoted as $\wvec$ and parameters
as $\etavec$. 
Using this notation, the two module system 
can be further represented in a simplified form in Figure~\ref{agri_fig_2}. While
this differs slightly from that in Figure~\ref{two-module}, 
the cut posteriors for both cases have the same form. By noting that the data 
$\wvec$ consists only of covariate data that is observed without error (i.e. it
is a deterministic quantity),  
the joint posterior is given by
$p(\etavec,\varphivec|\yvec)\propto p(\etavec)p(\varphivec|\etavec,\wvec)p(\zvec|\varphivec)$.   
After cutting feedback, the cut posterior is given at~\eqref{cut-two-module}.  

\begin{figure}[h]
	\centerline{\includegraphics[width=60mm]{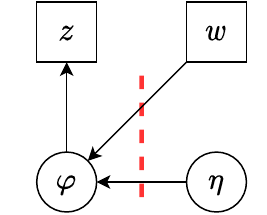}}
	\caption{\label{agri_fig_2} Simplified graphical representation of the two module agriculture model. The red dashed line indicates the cut.}
\end{figure}

Because $\wvec$ is deterministic, it is not possible to perform
the conflict check described in Section 5, because it requires the conditional distribution 
$p(\wvec|\zvec)$ to give a reference distribution. Therefore, we instead measure
conflict for the missing data $M_{\mathcal{A}}$ in the following way. 
Denote by $M_{\mathcal{A},i}$ the $i$th component
of $M_{\mathcal{A}}$, $i=1,\dots, n_{\mathcal{A}}$.  Denote by $p_{\text{cut}}(M_{\mathcal{A},i}=m|\yvec)$ 
and $p(M_{\mathcal{A},i}=m|\yvec)$ the probability
that $M_{\mathcal{A},i}=m$ under the cut and full posterior distributions,
respectively.  
Write $q_{\text{cut}}(M_{\mathcal{A},i}=m)$ and $q(M_{\mathcal{A},i}=m)$ for their
respective variational approximations, which are computed as described below.
Figure~\ref{agri_fig_3} gives pairwise scatterplots of the probability mass values
$\left(q_{\text{cut}}(M_{\mathcal{A},i}=m), q(M_{\mathcal{A},i}=m)\right)$, for all observations in the archaeological data $i=1,\dots, n_{\mathcal{A}}$. 
For points not close to the diagonal line, it indicates that the imputation for $M_{\mathcal{A}}$ is
very different under the cut and full posterior distribution. This is the case here, supporting the decision to cut the posterior in this case.  
\begin{figure}[h]
\begin{center}
\begin{tabular}{ccc}
\includegraphics[width=50mm]{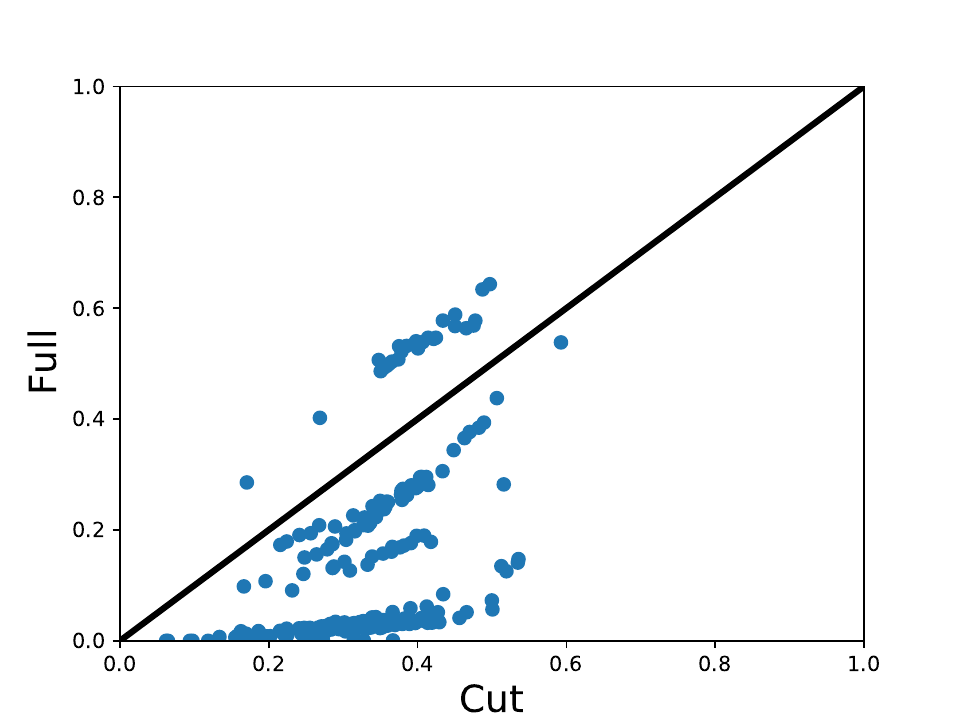} &
\includegraphics[width=50mm]{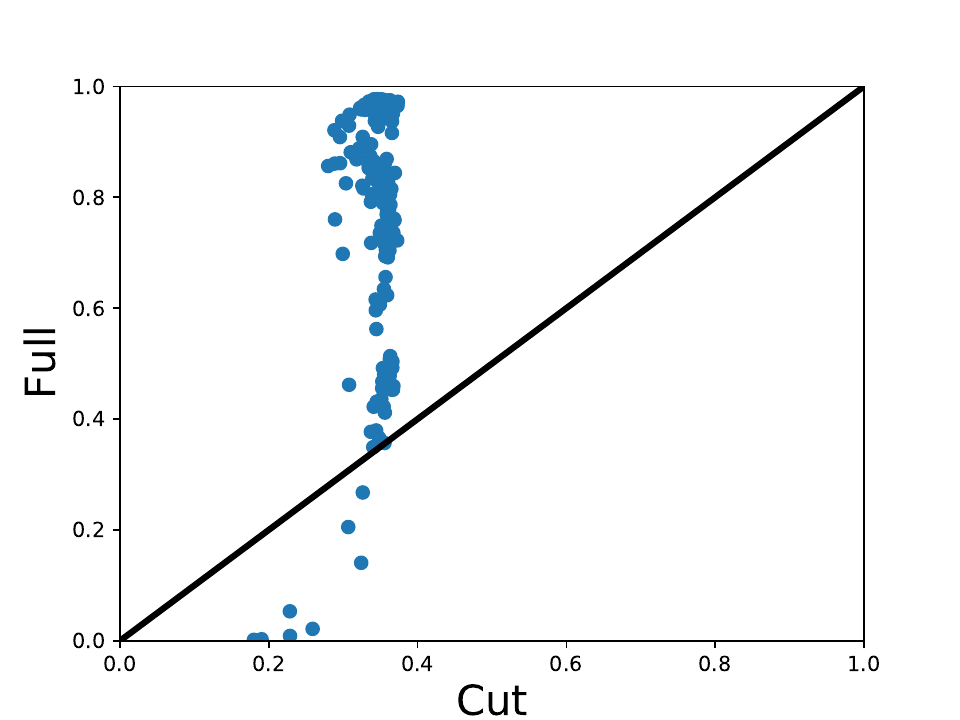} &
\includegraphics[width=50mm]{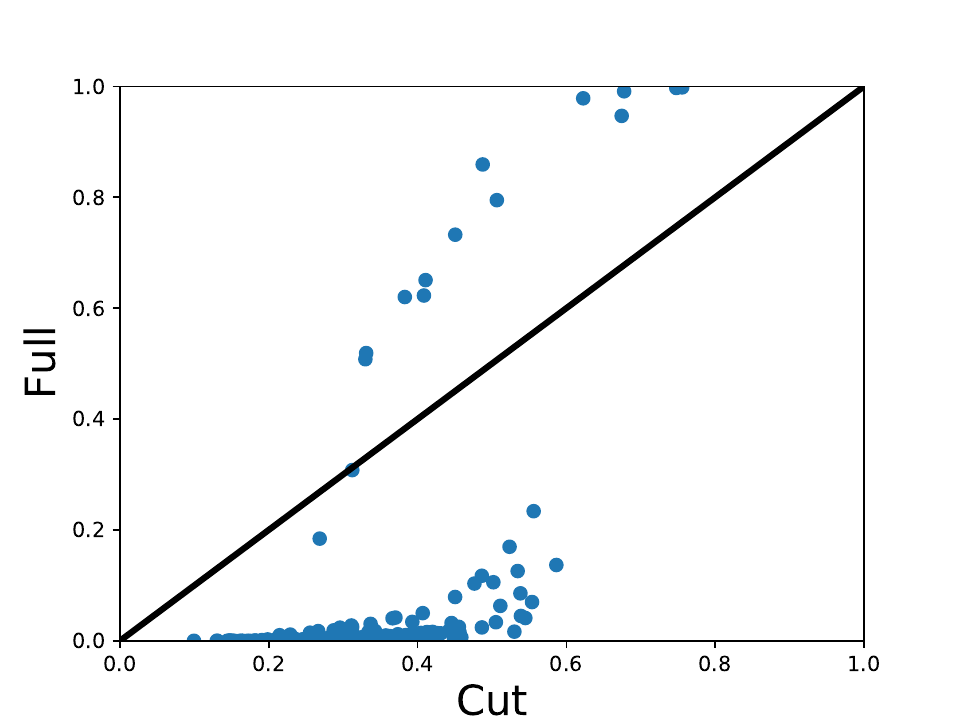}
\end{tabular}
\end{center}
	\caption{\label{agri_fig_3} Depiction of the distribution of the (unobserved)
		manuring level in the archaeological data, estimated using the full and cut variational posteriors. Each panel contains a
		scatterplot of the probability masses $\left(q_{\text{cut}}(M_{\mathcal{A},i}=m), q(M_{\mathcal{A},i}=m)\right)$, for $i=1,\dots, n_{\mathcal{A}}$. From left to right, 
		the panels correspond
		to $m=m_{low}$, $m=m_{med}$ and $m=m_{high}$.}
\end{figure}

\subsubsection{Variational inference}
A two-stage variational optimization is performed to get a Gaussian approximation of the posterior of the continuous parameters for 
the cut model. Parameters in the HM module are updated by the first-stage optimization whereas those in the PO module are updated by the second-stage optimization with $M_{\mathcal{A}}$ fixed to be the variational posterior samples obtained by the first stage.
The difficulty in this example is that, in the HM module, the three-level discrete missing value $M_{\mathcal{A}}$ cannot be approximated by a Gaussian distribution. In this case, we use a  method recently developed by~\cite{loaiza-maya+snd20} which treats $M_{\mathcal{A}}$ as a latent variable and updates it by Monte Carlo generation,
while updating the variational parameters by stochastic optimization as we now discuss.

In the HM module the unknowns are $\varphivec = (\rhovec^\top, M_{\mathcal{A}}^\top)^\top$, where $\rhovec$ denotes all the unknowns other than
$M_{\mathcal{A}}$.
The posterior density $p(\varphivec|\zvec) = p(\rhovec, M_{\mathcal{A}} | \zvec) \propto p(\zvec | \rhovec,M_{\mathcal{A}})p(M_{\mathcal{A}}|\rhovec)p(\rhovec) \equiv g(\rhovec, M_{\mathcal{A}})$ can be approximated by the variational density
\begin{equation}
	q_\lambdavec(\rhovec, M_{\mathcal{A}}) = p(M_{\mathcal{A}}|\rhovec, \zvec) q^0_\lambdavec(\rhovec)\,.\label{eq:lmsnd}
\end{equation}
A Gaussian variational approximation $q^0_\lambdavec(\rhovec)=\phi(\rhovec;\muvec,\mC\mC^\top)$ is adopted for $\rhovec$,
while the exact conditional posterior is used for $M_{\mathcal{A}}$.
\cite{loaiza-maya+snd20} show that~\eqref{eq:lmsnd} is an accurate variational
approximation, and that Algorithm~\ref{agri_alg_1} is a fast 
stochastic gradient ascent algorithm to minimize the variational lower bound. 
At step~(2) of this algorithm, 
$M_{\mathcal{A}}$ is generated from its conditional posterior which is available analytically here.  In step~(3), it is possible to use automatic differentiation
tools for computing the gradient estimate with the discrete samples.  
In step~(4), the
step sizes are obtained adaptively using the ADADELTA method \citep{zeiler12}.  
 We keep the last 10000 updates of $\varphivec = (\rhovec, M_{\mathcal{A}})$ as
 a sample from the variational posterior, and randomly draw from it in each update in the second-stage optimization for the PO module, so as to account for posterior uncertainty in the HM module.

\begin{algorithm}
	\caption{Variational Inference with Latent Variables $M_{\mathcal{A}}$} 
	\label{agri_alg_1}
	\begin{algorithmic}
		\STATE Initialize $\lambdavec = (\muvec, \mC)$, $M_{\mathcal{A}}$, run-length $K$
		\FOR{$k = 1:K$} 
		\STATE (1) Generate $\epsilonvec \sim \text{N}(\bm{0},I)$ and set $\rhovec = \mC \epsilonvec + \muvec$
		\STATE (2) Generate $M_{\mathcal{A}} \sim p(M_{\mathcal{A}} |\rhovec, \zvec)$ exactly
		\STATE (3) Compute an unbiased estimate of the lower bound gradient, 
		$$\widehat{\nabla_\lambda {\cal L}(\lambda)} = \frac{d\rhovec}{d\lambdavec}^\top \left\{\nabla_\rhovec \log g(\rhovec, M_{\mathcal{A}})-\nabla_{\rhovec} \log q_\lambdavec^0(\rhovec)\right\}$$
		\STATE (4) Update 
		$$\lambdavec = \lambdavec + \svec_k\circ \widehat{\nabla_\lambda {\cal L}(\lambda)},$$
		where $\svec_k$ is a vector of step sizes at iteration $k$ and $\circ$ denotes elementwise product
		for two vectors.  
		\ENDFOR
	\end{algorithmic}
\end{algorithm}

\begin{figure}[htbp]
\subcaptionbox{Cut posterior of $\gamma$.}[.48\linewidth]{\includegraphics[width=8cm]{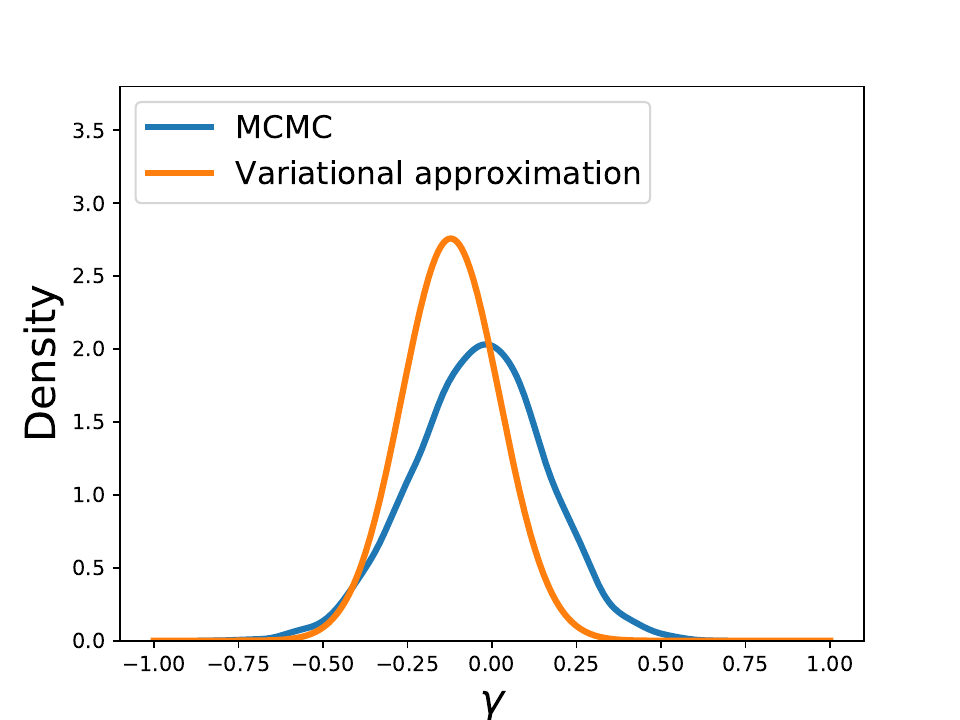}} \hspace{.04\linewidth}
\subcaptionbox{Full posterior of $\gamma$.}[.48\linewidth]{\includegraphics[width=8cm]{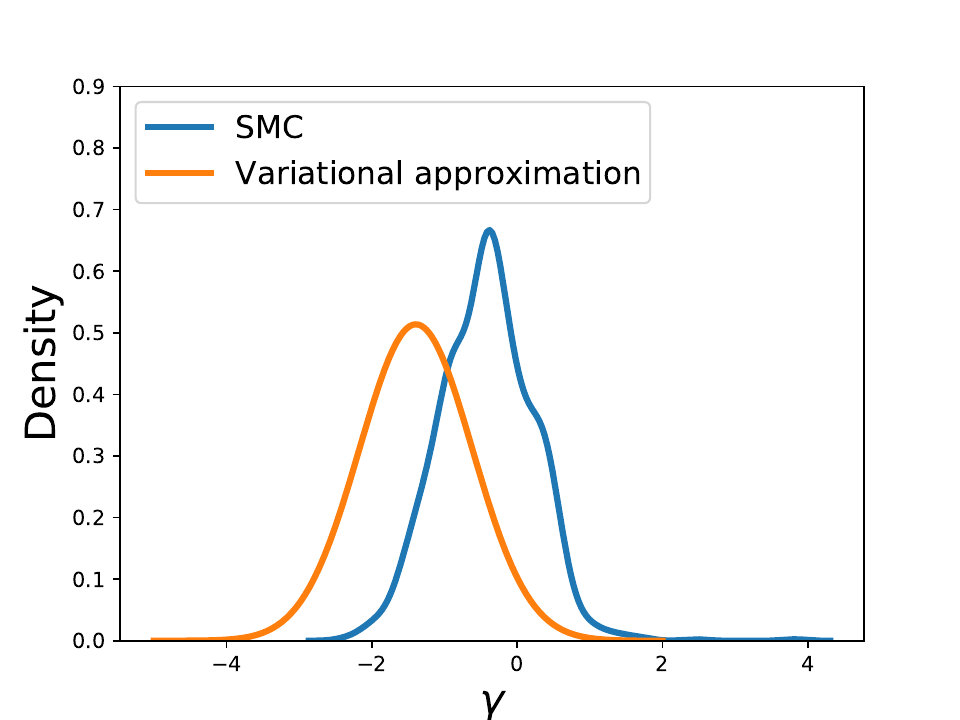}}
\caption{Posterior of $\gamma$ obtained by cut model (left) and full model (right).}\label{gamma}
\end{figure}

Figure \ref{gamma} shows the marginal full and cut posterior densities for $\gamma$ obtained by variational approximation and
exact-in-principle Monte Carlo methods.  For comparison with the variational methods, 
MCMC was used to approximate the cut posterior 
distribution, and a sequential Monte Carlo (SMC) sampler \citep{delmoral+dj06} for the full posterior distribution.   
The full posterior is complex, and we found it necessary to use the SMC sampler instead of MCMC to evaluate
it reliably.
The posterior densities obtained using variational approximation and MCMC are shown by orange and
blue lines, respectively.  We make two observations.  First, the variational posterior distributions for $\gamma$ are similar
to those obtained by MCMC for the cut distribution, with a larger difference between the variational and SMC method
for the full posterior distribution.  Second, the inferences
are very different for the cut and full posterior distributions.  
The inferences we obtain for the cut model are similar to those in \cite{carmona+n20}, although we have
used different priors in our analysis.
In the semi-modular inference approach of \cite{carmona+n20} they
consider partially cutting feedback, with the amount of feedback chosen according to predictive criteria.  
This results in a posterior probability of $\gamma<0$ (supporting the extensification hypothesis) 
somewhere in between those for the cut and full posterior
distributions.

The computation time for cut posterior approximation using the 
two-stage variational 
approach and MCMC are as follows. For a multiple imputation MCMC
approach \citep{plummer15} samples are generated for $\varphivec$
from $p(\varphivec|\zvec)$ in a first stage.  This takes 21 hours 
for MCMC with 100,000 iterations and 2 chains.  We retain
$1,000$ samples from the first stage.  
At a second stage, generating an approximate
conditional full posterior sample of $\etavec$ for each of the
first stage $\varphivec$ samples takes 33 minutes using a single chain
of length $10,000$ and retaining the last value.
In the two-stage variational approach, the first stage takes 93 minutes
in stage one and 3 minutes in stage two for 100,000 stochastic
gradient ascent iterations in each case.  Computations were done 
on an Intel i7-11800H CPU with 8 cores.

The full and cut posterior densities in this example are complex.  
In examples like this one, the reader might wonder what happens 
if the approximating family is poorly chosen.  In this case, if
the conventional KL variational objective is used, then underestimation
of uncertainty is a common result, although variational point estimates
can still perform well.  Many authors
have considered using 
alternative divergence measures in variational inference, 
which may result in approximations which are mass-covering.  
However, for these alternative approaches stable optimization in high dimensions
can be challenging.

\section{Discussion}

Variational methods have strong potential in modularized inference, including the case in which cutting feedback
between modules is required.  
This paper develops cut procedures using variational inference methods which have reduced computational demands
compared to existing MCMC implementations.  We consider both cut procedures defined through modifications of
variational message passing, as well as explicit formulations of cutting feedback where the cut posterior can be defined
as the solution of a constrained optimization problem.  In the explicit formulation, it is convenient to use fixed form variational
approximations based on a sequential decomposition, which also leads to practical variational implementations 
of computationally burdensome checks for conflicting information that are useful in making the decision of whether or not 
to cut.

In the message passing formulation of cutting feedback there are alternative possible implementations that have not
been explored here, and we leave this to future work.  Our work has mostly used simple Gaussian approximations to
facilitate the conflict checks discussed in Section 4, where the ability to explicitly compute KL divergences is an
advantage.  Further interesting work could be done on using more flexible variational families, similar to the last
example of Section 5 where we considered combining MCMC and variational inference methods using
the approach of \cite{loaiza-maya+snd20}.  The recent work in \cite{carmona+n21} using normalizing flows
is another promising direction.  It is interesting to ask whether variational message passing or stochastic gradient cut methods are preferred when both can be implemented.  In the two module 
case, there can be strong dependence between $\varphivec$ and $\etavec$,
and this makes a factorized approximation in variational message
passing unattractive.  
The generality and ease of implementation
of stochastic gradient optimization using automatic differentiation tools
make stochastic gradient cut approximations the preferred
approach in many situations.  However, we do think the variational message
passing approach may have uses in complex situations beyond the two module
case.

\section*{Acknowledgements}

The authors thank Chris Carmona and Geoff Nicholls for sharing some details of their work with us, and the review team for helpful feedback that improved the paper.

\section*{Appendix A: Variational message passing for biased data example}

In the biased data example, data likelihoods and priors can be summarized as follows
\begin{align*}
	\zvec = (z_1, \dots, z_{n_1})|\varphi \sim N(\varphi, 1) \quad i.i.d,\quad  \varphi \sim N(0, \delta_1^{-1}),\\
	\wvec = (w_1, \dots, w_{n_2})|\varphi,\eta \sim N(\varphi+\eta,1) \quad i.i.d, \quad \eta \sim N(0, \delta_2^{-1}),
\end{align*}
where $\delta_1$ and $\delta_2$ are known.

The variational joint posterior at~\eqref{mfvb} has the form
\begin{align*}
	q(\varphi, \eta) = q_\varphi(\varphi)q_\eta(\eta).
\end{align*}
The coordinate ascent updates for $q_\varphi(\varphi)$ and $q_\eta(\eta)$ for approximating the full posterior can be written as
\begin{align*}
	q_\varphi(\varphi) & \propto m_{p(\varphi)\rightarrow \varphi}(\varphi) \times m_{p(\zvec|\varphi)\rightarrow \varphi}(\varphi) \times m_{p(\wvec|\varphi,\eta)\rightarrow \varphi}(\varphi), \\
	q_\eta(\eta) & \propto m_{p(\eta)\rightarrow \eta}(\eta) \times m_{p(\wvec|\varphi,\eta)\rightarrow \eta}(\eta).
\end{align*}
where
\begingroup
\allowdisplaybreaks
\begin{align*}
	m_{p(\varphi)\rightarrow \varphi}(\varphi) & = \exp\left(E_{q_\eta}(\log p(\varphi))\right)=p(\varphi)= \sqrt{\frac{\delta_1}{2\pi}} \exp (-\frac{1}{2} \delta_1 \varphi^2),\\
	m_{p(\eta)\rightarrow \eta}(\eta) & = \exp\left(E_{q_\varphi}(\log p(\eta))\right)=p(\eta)= \sqrt{\frac{\delta_2}{2\pi}} \exp (-\frac{1}{2} \delta_2 \eta^2),\\
	m_{p(\zvec|\varphi)\rightarrow \varphi}(\varphi) & = \exp\left(E_{q_\eta}(\log p(\zvec|\varphi))\right)=p(\zvec|\varphi)= \frac{n_1}{\sqrt{2\pi}} \exp (-\frac{1}{2} \sum_{i=1}^{n_1}(z_i - \varphi)^2),\\
	m_{p(\wvec|\varphi,\eta)\rightarrow \varphi}(\varphi) & = \exp\left(E_{q_\eta}(\log p(\wvec|\varphi,\eta))\right)\\ &= \exp(-\frac{n_2}{2}\log(2\pi) - \frac{1}{2} \sum_{i=1}^{n_2} E_{q_\eta}[(w_i - \varphi - \eta)^2])\\
	&= \exp(-\frac{n_2}{2}\log(2\pi) - \frac{1}{2} \sum_{i=1}^{n_2}[Var_{q_\eta}(\eta)+ (w_i-\varphi- E_{q_\eta}(\eta))^2]),\\
	m_{p(\wvec|\varphi,\eta)\rightarrow \eta}(\eta) & =\exp\left(E_{q_\varphi}(\log p(\wvec|\varphi,\eta))\right)\\
	&= \exp(-\frac{n_2}{2}\log(2\pi) - \frac{1}{2} \sum_{i=1}^{n_2} E_{q_\varphi}[(w_i - \varphi - \eta)^2])\\
	&= \exp(-\frac{n_2}{2}\log(2\pi) - \frac{1}{2} \sum_{i=1}^{n_2}[Var_{q_\varphi}(\varphi)+ (w_i-E_{q_\varphi}(\varphi)- \eta)^2]).\\
\end{align*}
\endgroup

Thus, 
\begingroup
\allowdisplaybreaks
\begin{align*}
	q^*(\varphi) & \propto m_{p(\varphi)\rightarrow \varphi}(\varphi) \times m_{p(\zvec|\varphi)\rightarrow \varphi}(\varphi) \times m_{p(\wvec|\varphi,\eta)\rightarrow \varphi}(\varphi) \\
	&\propto \exp(-\frac{\delta_1}{2} \varphi^2) \exp(-\frac{1}{2}\sum_{i=1}^{n_1}(z_i -\varphi)^2) \exp(- \frac{1}{2} \sum_{i=1}^{n_2}(w_i-\varphi- E_{q_\eta}(\eta))^2)\\
	&\propto \exp(-\frac{1}{2}[(\delta_1 + n_1+n_2)\varphi^2 - 2(\sum_{i=1}^{n_1}z_i + \sum_{i=1}^{n_2}(w_i-E_{q_\eta}(\eta))\varphi]),\\
	q^*(\eta) & \propto m_{p(\eta)\rightarrow \eta}(\eta) \times m_{p(\wvec|\varphi,\eta)\rightarrow \eta}(\eta)\\
	&\propto \exp(-\frac{\delta_2}{2} \eta^2) \exp(-\frac{1}{2}\sum_{i=1}^{n_2}(w_i - E_{q_\varphi}(\varphi) - \eta)^2)\\
	&\propto \exp(-\frac{1}{2}[(\delta_2+n_2)\eta^2 -2(\sum_{i=1}^{n_2}(w_i-E_{q_\varphi}(\varphi)))\eta]).
\end{align*}
\endgroup
The variational posterior densities of $\varphi$ and $\eta$ are both normal distributions. To update them, we only need to update means and variances by
\begin{align*}
	&E_{q_\varphi}(\varphi) = \frac{n_1 \bar{\zvec} + n_2 \bar{\wvec} - n_2 E_{q_\eta}(\eta)}{\delta_1 + n_1 +n_2}, \quad \quad Var_{q_\varphi}(\varphi) = \frac{1}{\delta_1 + n_1 + n_2},\\
	&E_{q_\eta}(\eta) = \frac{n_2\bar{\wvec} - n_2 E_{q_\varphi}(\varphi)}{\delta_2 + n_2}, \quad \quad
	Var_{q_\eta}(\eta) = \frac{1}{\delta_2 + n_2}.
\end{align*}

The cut marginal posterior density for $\varphi$ is just $p(\varphi|z)$ by
the discussion in Section 3.3.  Hence after some simple algebra
$$q_{\text{cut},\varphi}(\varphi)=p(\varphi|z)=\phi\left(\varphi;(n_1+\delta_1)^{-1}n_1\bar{z},(n_1+\delta_1)^{-1}\right).$$

To get $q_{\text{cut},\eta}(\eta)$, we plug the posterior
mean of $\varphi$ from the cut posterior density into $q^*(\eta)$ to obtain
$$q_{\text{cut},\eta}(\eta) = \phi\left(\eta;(n_2+\delta_2)^{-1} n_2(\bar{w}-(n_1+\delta_1)^{-1} n_1\bar{z}),(n_2+\delta_2)^{-1}\right).$$

The exact joint cut posterior density is $p(\varphi|z)p(\eta|\varphi,w)$, 
where $p(\varphi|z)$ is given above, and 
\begin{align*}
  p(\eta| \varphi,w) & \propto p(\eta)p(w|\varphi,\eta),
\end{align*}
which simplifies after some algebra to 
$$p(\eta|\varphi,w)=\phi\left(\eta; (\delta_2+n_2)^{-1} n_w (\bar{w}-\varphi),(\delta+n_2)^{-1}\right).$$
We can see that $q_{\text{cut},\eta}(\eta)=p(\eta|\varphi=\mu_\varphi,w)$, where
$\mu_\varphi=n_1\bar{z}/(n_1+\delta_1)$.  So the cut marginal variational posterior density
for $\eta$ is obtained by plugging in a point estimate of $\varphi$ to the posterior
full conditional for $\eta$.

Finally, the exact Bayes posterior is
$$p(\varphi|z,w)p(\eta|\varphi,w),$$
with $p(\eta|\varphi,w)$ given above,and 
$$p(\varphi|z,w)=\phi\left(\left( n_1+\delta_1+\frac{n_2\delta_2}{\delta_2+n_2}\right)^{-1} \left( n_1\bar{z}+\frac{n_2\delta_2}{\delta_2+n_2}\bar{w} \right), \left( n_1+\delta_1+\frac{n_2\delta_2}{\delta_2+n_2}\right)^{-1} \right).$$
It can be seen from this expression that if $n_2$ and $\delta_2$ are large, then
the posterior mean for $\varphi$ will be dominated by the sample mean
obtained from the biased data.

\section*{Appendix B: Conflict checking for simulated data, HPV example}

For the posterior mean parameter values obtained via MCMC for the HPV
example of Section 6.1, we simulated a replicate dataset.  Since the
data are simulated, there is no misspecification of the likelihood in the
second module.   We then repeated the conflict check shown in Figure
\ref{hpv-check}, for the simulated data.  The result is shown in Figure
\ref{hpv-check-sim}.  As might be expected, the difference between
the full and cut variational posterior distribution is not large compared
to the reference distribution in this setting of correct model specification.
\begin{figure}[ht]
	
	\centering
	{\centering\includegraphics[width=90mm]{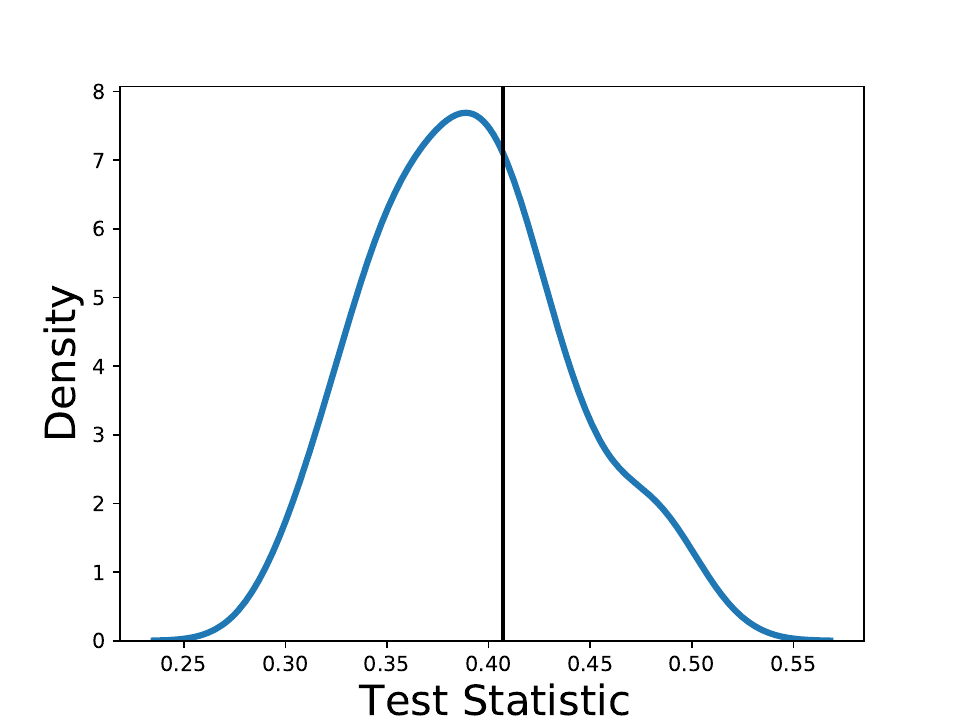}
		
	}
	\caption{Observed test statistic value (black vertical line) and estimated reference density
	for the test statistic based on $100$ simulations for conflict check for the simulated HPV example.}
	\label{hpv-check-sim}
\end{figure}

\section*{Appendix C: Agricultural extensification model}
In this appendix we detail
the model employed in the agricultural extensification example. 
It consists of two modules: a regression model (HM module) used to impute the missing 
Manure Level observations in the archaeological dataset ($M_{\mathcal{A}}$), and a proportional odds model (PO module) used
to specify the
parameter $\gamma$ that is employed to assess the hypothesis of extensification.
The likelihoods and priors of the HM and PO modules are specified below, 
from which both the cut and full variational posteriors can be obtained.

\subsection*{C.1: HM module}
The HM module is a linear Gaussian regression that pools both the archaeological data $\mathcal{A}$ and 
modern data $\mathcal{M}$. The response is
Nitrogen Level ($Z_{d,i}$), and there are fixed effects in Rainfall ($R_{d,i}$) and Manure Level ($M_{d,i}$), as well as a random effect in Site Location ($P_{d,i}$), where $d={\mathcal A},\mathcal{M}$
indexes the dataset
and $i=1,\ldots,n_d$ the observation. The Manure Level is a categorical variable
where $M_{d,i}\in\{m_{low}, m_{med},m_{high}\}$ with 
$m_{low}<m_{med}<m_{high}$. 
Dummy variables $D_{d,i}^{med}=\mathbbm{1}(M_{d,i}=m_{med})$ and 
$D_{d,i}^{high}=\mathbbm{1}(M_{d,i}=m_{high})$
are introduced for medium $m_{med}$
and high $m_{high}$ manuring levels (with low manuring level $m_{low}$ acting
as the baseline category). 
There are $q_2=24$ site locations, so that if $l(d,i)$ is the location of site $P_{d,i}$ then the regression is
\begin{equation}
Z_{d,i}=(1,R_{d,i},D_{d,i}^{med},D_{d,i}^{high})^\top \betavec + \zeta_{l(d,i)}+\epsilon_{d,i}\,,\label{eq:hmmod}
\end{equation}
where $\bm{\beta} = (\beta_1, \dots, \beta_4)^\top$ are fixed effects coefficients, 
and
$\zeta_l\sim \text{N}(0,\sigma^2_\zeta)$ are the $l=1,\ldots,q_2$ location random effect
values. The errors are heteroscedastic, with $\epsilon_{d,i} \sim \text{N}(0, \sigma^2)$ if the $(d,i)$th observation has Crop Category $C_{d,i}$ given by barley, and $\epsilon_{d,i} \sim \text{N}(0, \upsilon\sigma^2)$ if wheat. 

Conditional on the $n_{\mathcal{A}}$ observations on the Manuring Level
$M_{\mathcal A}$ and Rainfall $R_{\mathcal A}$, as well as the 
random effect values $\bm{\zeta}=(\zeta_1,\ldots,\zeta_{q_2})^\top$, 
the likelihood of the HM module is simply given by the product 
of $n_{\mathcal A}+n_{\mathcal M}$ Gaussian densities specified at~\eqref{eq:hmmod}.

Given a large number $g = 1000$, the following priors are used:
\begin{align*}
	& M_{\mathcal{A},i} \sim {\cal U}\left\{1,3\right\}, \quad
	\betavec \sim \text{N}(\bm{0}, gI_4), \quad \sigma^2 \sim \text{U}(0.1,5), \\
	&\sigma_\zeta^2 \sim \text{U}(0.1,5),\quad \widetilde{\upsilon} = \log \upsilon \sim \text{N}(0, g).  
\end{align*}
For $\log R_{\mathcal{A}}$, we use normal priors for the components, where the component-specific means and variances
match those of the uniform priors used in \cite{styring17} (Figure 18 of the supplementary information).   
Define $\widetilde{\sigma} = \log \frac{\sigma^2-0.1}{5-\sigma^2}$, $\widetilde{\sigma_\zeta} = \log \frac{\sigma_\zeta^2-0.1}{5-\sigma_\zeta^2}$ and $\widetilde{\upsilon} = \log \upsilon$, so that $\widetilde{\sigma}$, $\widetilde{\sigma_\zeta}$ and $\widetilde{\upsilon}$ are unconstrained and suitably
approximated by a Gaussian.
Then parameters and unknown quantities in the HM module can be summarized as $\varphivec = (\rhovec^\top, M_{\mathcal{A}}^\top)^\top$, where $\rhovec = (\log(R_{\mathcal{A}})^\top, \betavec^\top, \sigma, \zetavec^\top, \sigma_{\zeta}, \upsilon)^\top$.

\subsection*{C.2: PO Module}
The PO module is a proportional odds model applied only to
the archaeological data with an ordinal response Manure Level ($M_{\mathcal{A},i}$), a fixed effect for Site Size ($S_{\mathcal{A},i}$), random effect for Site Location ($P_{\mathcal{A},i}$), and a logit link function. Let $p_m(P,S) = {\rm Pr}(M \leq m|P,S)$ be the cumulative distribution function of Manure Level $M$ at level $m$
from an observation with Site Size $S$ and Location $P$. Then for the
observations $i = 1, \dots, n_{\mathcal{A}}$,
the proportional odds model has the form
\begin{align*}
\log \left( \frac{p_m(P_{\mathcal{A},i},S_{\mathcal{A},i})}{1 - p_m(P_{\mathcal{A},i},S_{\mathcal{A},i})}   \right) = \alpha_m - \gamma S_{\mathcal{A},i} - \xi_{l(\mathcal{A},i)}\,.
\end{align*}
There are only five locations in the archaeological dataset, 
so that the site location random effect values 
are $\xivec=(\xi_1,\ldots,\xi_5)^\top \sim \text{N}(\bm{0}, \sigma_\xi^2 I_5)$. 
The coefficient
$\alpha_m \in \left\{ \alpha_{low},\alpha_{med} \right\}$ 
varies according to the
manure level value $m$, and $\gamma$ measures the effect of site size. 


Define $\mbox{p}_{m,i}\equiv \mbox{Pr}(M_{\mathcal{A},i}\leq m|P_{\mathcal{A},i},S_{\mathcal{A},i})$, and
the dummy variable
$D_{\mathcal{A},i}^{low}=1-D_{\mathcal{A},i}^{med}-D_{\mathcal{A},i}^{high}$.
Then 
\begin{align*}
p(M_{\mathcal{A}}|\alphavec,\gamma, \xivec, \sigma_\xi) = \prod_{i=1}^{n_{\mathcal{A}}} \left(\mbox{p}_{low,i}\right)^{D_{\mathcal{A},i}^{low}}\left(\mbox{p}_{med,i}-\mbox{p}_{low,i}\right)^{D_{\mathcal{A},i}^{med}}\left(1-\mbox{p}_{med,i}\right)^{D_{\mathcal{A},i}^{high}}.
\end{align*}

Define $\tilde{\alpha}_{med} = \log (\alpha_{med} - \alpha_{low})$ and $\tilde{\sigma_\xi} = \log \frac{\sigma_\xi}{3.5 - \sigma_\xi}$, so that $\tilde{\alpha}_{med}$ and $\tilde{\sigma_\xi}$ are unconstrained and suitably
approximated as Gaussian. Let $\alphavec=(\alpha_{low}, \tilde{\alpha}_{med})^\top$, 
then the parameters in PO module are $\etavec = (\gamma,\alphavec^\top, \xivec^\top, \sigma_\xi)^\top$. The following priors are used:
\[
\gamma \sim \text{N}(0, 4), \quad \alpha_{low} \sim \text{N}(0,1.5), \quad \tilde{\alpha}_{med} \sim \text{N}(-5,7), \quad \sigma_\xi \sim \text{U}(0, 3.5)\,.
\]

\bibliographystyle{chicago}
\bibliography{references}

\end{document}